\newcommand{\IN}{\mathbb{N}}
\newcommand{\VC}{\operatorname{VCdim}}
\newcommand{\stkout}[1]{\ifmmode\text{\sout{\ensuremath{#1}}}\else\sout{#1}\fi}
\newif\ifverbose
\newtheorem{theorem}{Theorem}[section]
\newtheorem{proposition}[theorem]{Proposition}
\newtheorem{corollary}[theorem]{Corollary}
\newtheorem{lemma}[theorem]{Lemma}
\newtheorem{definition}[theorem]{Definition}
\theoremstyle{definition}
\newtheorem{remark}[theorem]{Remark}
\numberwithin{equation}{section}
\begin{document}

\title{From Undecidability of Non-Triviality and Finiteness to Undecidability of Learnability}
\author{
Matthias C.~Caro\footnote{Part of this work was done at the Mathematics Department of the Technical University of Munich and at the Munich Center for Quantum Science and Technology (MCQST).} \thanks{\emph{ORCiD:} \href{https://orcid.org/0000-0001-9009-2372}{0000-0001-9009-2372}}\\
\small Institute for Quantum Information and Matter, Caltech, Pasadena, CA, USA\\
\small Dahlem Center for Complex Quantum Systems, Freie Universität Berlin, Berlin, Germany\\
%\small \texttt{matthias.caro@fu-berlin.de}
% \small Technical University of Munich, Department of Mathematics, Garching, Germany\\
% \small Munich Center for Quantum Science and Technology (MCQST), Munich, Germany\\
\small \texttt{mcaro@caltech.edu}
}   
\date{}

\maketitle

\begin{abstract}
%\textbf{Context:} 
Machine learning researchers and practitioners steadily enlarge the multitude of successful learning models. They achieve this through in-depth theoretical analyses and experiential heuristics.
%\textbf{Research gap:} 
However, there is no known general-purpose procedure for rigorously evaluating whether newly proposed models indeed successfully learn from data.

%\textbf{Result:} 
We show that such a procedure cannot exist. For PAC binary classification, uniform and universal online learning, and exact learning through teacher-learner interactions, learnability is in general undecidable, both in the sense of independence of the axioms in a formal system and in the sense of uncomputability.
%\textbf{Methods:} 
Our proofs proceed via computable constructions that encode the consistency problem for formal systems and the halting problem for Turing machines into whether certain function classes are trivial/finite or highly complex, which we then relate to whether these classes are learnable via established characterizations of learnability through complexity measures.
%\textbf{Impact:} 
Our work shows that undecidability appears in the theoretical foundations of artificial intelligence: There is no one-size-fits-all algorithm for deciding whether a machine learning model can be successful. We cannot in general automatize the process of assessing new learning models.

%\keywords{PAC learning, online learning, VC-dimension, Littlestone dimension, teaching dimension, undecidable problems, uncomputable problems}
% \PACS{PACS code1 \and PACS code2 \and more}
% \subclass{MSC code1 \and MSC code2 \and more}
\end{abstract}

%\listoftodos
\newpage
\section{Introduction}\label{SctIntroduction}

%One of the fundamental questions in learning theory, independently of the specific learning model under consideration, is ``When is learning possible?''. 
One of the foundational questions in machine learning theory is ``When is learning possible?'' This is the question for necessary and sufficient conditions for learnability. Such conditions have been identified for different learning models. They can take the form of requiring a certain, often combinatorial, complexity measure to be finite. Well-known examples of such complexity measures include the VC-dimension for binary classification in the PAC model, the Littlestone dimension for online learning, and different notions of teaching dimensions for teacher-learner interactions.

We consider a question that is slightly different from, but arguably just as important as the one above. Namely, we ask ``Can we decide whether learning is possible?'' At first glance, the ability to answer the first question might also seem to allow to resolve this second one. If, e.g., you know a complexity measure whose finiteness is equivalent to learnability, that gives you a criterion to decide learnability. However, whether this is indeed a satisfactory criterion strongly depends on the exact meaning of ``decide'' in the second question.

We consider two such meanings and thereby obtain two variants of the second question. The first is natural from a mathematician's perspective, namely ``If a class is learnable, can we prove that this is the case?'' The second is intimately familiar to computer scientists, namely ``Does there exist an algorithm that decides learnability?'' After specifying in either of these two ways what it means to ``decide whether learning is possible,'' we see that the answer to the second question is not trivially positive. %, even given an answer to the first. 
Even given the definition of a complexity parameter that is finite if and only if learning is possible, answering the second question still requires a proof of finiteness of that complexity measure or an algorithm that decides whether the complexity measure is finite or not.

%In fact, we show that not only does an answer to the first question not trivially imply a positive answer to the second, but that the answer to the second question is, in general, negative for both of the variants introduced above and for different learning scenarios. 
In fact, we show that the answer to the question ``Can we decide whether learning is possible?'' is, in general, negative for both of the variants introduced above and for different learning scenarios. In particular, we demonstrate this for learning models in which criteria for learnability in terms of complexity measures are known. More concretely, we consider binary classification, uniform and universal online learning, and the task of exactly identifying a function through teacher-learner interactions. We show in all these scenarios: On the one hand, there is a function class that is learnable but whose learnability cannot be proved. On the other hand, there is no general-purpose algorithm that, upon input of a class, decides whether it is learnable.
In our reasoning, we trace the impossibility of deciding learnability back to the folklore impossibility of deciding non-triviality and finiteness. This allows us to handle all the aforementioned different learning scenarios with a unified construction.

\subsection{Overview of the Results}\label{SbSctOverview}

Our undecidability results come in two flavours, one about provability in a formal system, the other about computability via Turing machines. We summarize our line of reasoning in Figure \ref{fig:flowchart} and explain it in more detail in the following paragraphs.

\begin{figure}[ht]
    \centering        
    \makebox[\textwidth][c]{
    \includegraphics{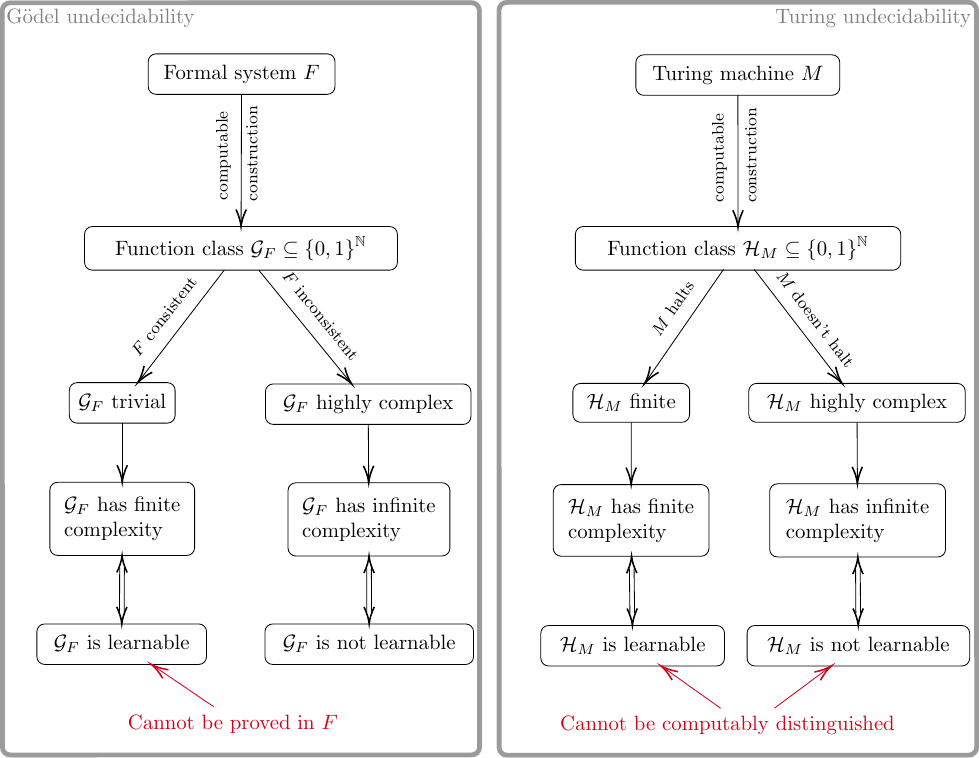}
}
    \caption{A depiction of our line of reasoning. ``Complexity'' is to be understood in terms of VC-dimension, teaching dimension, Littlestone dimension, or Littlestone trees, depending on the learning model. To conclude undecidability, we use Gödel's second incompleteness theorem and the uncomputability of the halting problem, respectively.}
    \label{fig:flowchart}
\end{figure}

Our general construction and proof strategy is the same for all the different learning scenarios considered in this work. On the one hand, given a recursively enumerable formal system $F$, we define a class $\mathcal{G}_F\subseteq\{0,1\}^\mathbb{N}$ (Definition \ref{DffGoedelFctClass}) that is trivial if $F$ is consistent, and highly complex if $F$ is inconsistent. If Gödel's second incompleteness theorem applies to $F$, we conclude that the function class $\mathcal{G}_F$ is trivial, but its triviality cannot be proved in $F$. On the other hand, given a Turing machine $M$, we define a class $\mathcal{H}_M\subseteq\{0,1\}^\mathbb{N}$ (Definition \ref{DffTuringFctClass}) that is finite if $M$ halts on the empty input, and highly complex, namely equal to $c_c(\{0,1\})$, if $M$ does not halt on the empty input. (Here, $c_c(\{0,1\})$ denotes the recursively enumerable set of $\{0,1\}$-valued functions with finite support in $\IN=\{0,1,2,\ldots\}$.)
Using a reduction to the halting problem, we use this to conclude that there is no general-purpose algorithm for deciding whether a computable binary-valued function class is finite or equal to $c_c(\{0,1\})$.

Our constructions start from $c_c(\{0,1\})$. Depending on the underlying object, i.e., the formal system or the Turing machine, we then further restrict the function class. We implement these restrictions based on consistency of finitely many provable theorems and halting after finitely many steps. Thereby, we give a constructive recipe for evaluating any desired element of the function class on any given input, which ensures that the classes are computable from the underlying object. For the Gödel scenario, this translates the assumption of the existence of a recursive enumeration of the provable theorems into a property of the function class. For the Turing scenario, this computability is necessary for a reduction to the halting problem. Abstracting from our recipe, we obtain a formalization of how an algorithm can be given a function class as input by identifying a computable function class (compare Definition \ref{DffComputableFunctionClass}) with its code w.r.t.~a universal Turing machine.

Combining our observations about triviality and finiteness of our constructed function classes with characterizations of learnability through complexity measures now leads us to corresponding statements about undecidability of learnability in different scenarios. 
We first study binary classification in \emph{Probably Approximately Correct (PAC)} learning. The relevant complexity measure for this learning scenario is the VC-dimension due to \cite{Vapnik.1971}.
By showing that the VC-dimension of $\mathcal{G}_F$ is finite if $\mathcal{G}_F$ is trivial and infinite if $\mathcal{G}_F$ is highly complex, we conclude: If Gödel's second incompleteness theorem applies to $F$, the function class $\mathcal{G}_F$ is PAC learnable, but its PAC learnability cannot be proved in $F$.
Similarly, as the VC-dimension of a finite class is finite and the VC-dimension of $c_c(\{0,1\})$ is infinite, we see that there is no computable way of deciding whether a computable $\{0,1\}$-valued function class has finite VC-dimension.

We apply a similar reasoning for learning scenarios with teacher-learner interactions, where a teacher can provide examples of a target function to help a learner identify that function. Here, the basic complexity measure is the teaching dimension \cite{Goldman.1995}, which is finite if and only if the teaching problem can be solved with finitely many examples. Combining this with our constructions of $\mathcal{G}_F$ and $\mathcal{H}_M$ allows us to show both Gödel and Turing undecidability of finiteness of the teaching dimension and thus learnability/teachability. We demonstrate the undecidability of one more decision problem motivated by teacher-learner interactions. Namely, in general, one cannot decide -- both in the sense of independence of the axioms of a formal system and in the sense of uncomputability -- whether a given function in a known class can be taught/learned from finitely many examples.

Finally, our constructions also yield undecidability results for uniform and universal online learning. For online learning with uniform mistake bounds, the Littlestone dimension \cite{Littlestone.1988} is the corresponding complexity parameter. For universal online learning, the relevant complexity condition is whether there exists an infinite Littlestone tree \cite{Bousquet.2020}. After showing that whether these complexity conditions are satisfied by $\mathcal{G}_F$ and $\mathcal{H}_M$ is again determined by whether $F$ is consistent and whether $M$ halts on the empty input, respectively, we conclude: Both uniform and universal online learnability are, in general, both Gödel and Turing undecidable.

Compared to prior work on undecidability in learning theory, which we review in Subsection \ref{SbSct:related-work}, our approach is at the same time more direct and is the first that simultaneously proves undecidability results for multiple established learning models both in the sense of formal independence and in the sense of uncomputability. Our main contribution consists in carefully elaborating the computational model, which then facilitates the construction and study of the function classes $\mathcal{G}_F$ and $\mathcal{H}_M$. As part of our model we formalize what it means for an algorithm to be given a function class as input, for which learnability is then to be decided. Conceptually, we show that many of the established learnability criteria in terms of complexity measures are undecidable, thus demonstrating a limitation of the approach towards learnability and model selection via such complexity measures.

\subsection{Related Work}\label{SbSct:related-work}
\cite{Lathrop.1996} made an early investigation into the relationship between computability and learnability. The main question in \cite{Lathrop.1996} is whether and under which notions of ``learnability'' one can consider an uncomputable problem to be learnable. More precisely, \cite{Lathrop.1996} considered the task of learning the halting problem relative to an oracle.

Both \cite{schaefer1999deciding} and \cite{Zhao.2018} studied the computability of finiteness of the VC-dimension. In particular, Theorem $1$ in \cite{Zhao.2018} and Theorem $4.1$ in \cite{schaefer1999deciding} state: Deciding finiteness of the VC-dimension of a computable concept class is $\Sigma_2$-complete. This implies our Corollary \ref{CrlTuringUndecidabilityFinitenessVCDim}, the Turing undecidability of finiteness of the VC-dimension. The proofs of \cite{schaefer1999deciding} and \cite{Zhao.2018} used that deciding finiteness of the domain of a computable function is $\Sigma_2$-complete (see, e.g., Theorem IV.$3.2$ in \cite{soare1978recursively}). \cite{Zhao.2018} additionally invoked a result by \cite{Laskowski.1992}. %: A function class uniformly definable via a first-order formula has finite VC-dimension if and only if the defining formula is an \emph{NIP} formula. 
While one of our results is already implied by \cite{schaefer1999deciding} and \cite{Zhao.2018}, we consider our work to be a significant extension in two directions: On the one hand, we consider both Turing and Gödel undecidability. On the other hand, our proof strategies are at the same time more direct, using no results from logic beyond Gödel's incompleteness theorems and the Turing undecidability of the halting problem, and are flexibly applicable to other complexity measures and learning scenarios.
Note that, while \cite{schaefer1999deciding, Zhao.2018} as well as the present paper consider function classes with discrete instance spaces, \cite{calvert2015pac} established $\Sigma_3$-completeness of finiteness of the VC-dimension among a certain family of reasonable classes that in particular can have continuous input spaces.

\cite{BenDavid.2019} proposed the ``estimating-the-maximum'' (EMX) problem and proved that learnability in this model is independent of the ZFC axioms. While this already indicates that learning can be undecidable, our results add new insight in at least two ways. First, our results are for already established learning models. In particular, whereas \cite{BenDavid.2019} showed that, assuming consistency of ZFC, there is no dimension-like quantity of finite character that characterizes EMX learnability, our results include scenarios in which such dimensions for learning exist. Second, whereas \cite{BenDavid.2019} used the continuum, the continuum hypothesis, and the axiom of choice, we only use natural numbers and computable objects. This allows us to prove uncomputability results, which cannot be derived from the results of \cite{BenDavid.2019}. Some implications and some limitations of the approach of \cite{BenDavid.2019} have been discussed, e.g., in \cite{Hart.2019,Taylor.2019,Gandolfi.2020}. Building on \cite{BenDavid.2019}, \cite{hanneke2023bandit} recently established a similar-in-spirit undecidability result for bandit learnability.

\cite{Agarwal.2020} initiated a study of computable learners, which then truly deserve to be called ``learning algorithms.'' In particular, \cite{Agarwal.2020} showed that not every PAC learnable class admits a computable learner, identified conditions under which PAC learnability implies computable PAC learnability, and, in~\cite{agarwal2021openproblem}, posed several open questions regarding computable learning, some of which have been resolved in \cite{sterkenburg2022characterizations}. Thereby, \cite{Agarwal.2020} extended considerations from \cite{Soloveichik.2008}, which studied the task of non-uniform learning over all computable functions by a computable learner. As the underlying questions of \cite{Agarwal.2020} and our work differ, the results are not comparable. However, as we show the function classes $\mathcal{G}_F$ and $\mathcal{H}_M$ to be computable, the results of \cite{Agarwal.2020} imply that our undecidability results hold not only for PAC learning, but also for computable PAC learning. The notion of computable learners was extended by~\cite{ackerman2021computable} to a broader class of input and output spaces, so-called computable extended metric spaces. In related work,~\cite{crook2021computability} considered computably verifiable properties of learners in classification tasks. In the spirit of these developments, \cite{hasrati2023computable} recently introduced a computable variant of online learning.

\cite{sehra2021undecidability} took yet another perspective on undecidabity in learning theory. Namely, \cite{sehra2021undecidability} considered the problem of deciding, given an algorithm $\mathcal{A}$ and a dataset $d$, whether $\mathcal{A}$ is a learning algorithm and the output model of $\mathcal{A}$ underfits $d$. Here, \cite{sehra2021undecidability} used an information-theoretic notion of underfitting. The main result of \cite{sehra2021undecidability}: This decision problem can be reduced to the halting problem and is thus Turing undecidable.

\cite{hanneke2021universal} recently identified a further potential source of undecidability in learning theory. They studied the existence of universally Bayes consistent learners for countable multiclass classification, i.e., of learners whose classification error almost surely converges to the optimal Bayes risk (over all Borel measurable classifiers) as the sample size goes to infinity. Theorem $4.1$ in \cite{hanneke2021universal} states: A universally Bayes consistent classifier can only exist if the metric space from which the instances are drawn is essentially separable. The existence of metric spaces that are not essentially separable, however, is believed to be independent of the ZFC axioms. Hence, whether all metric instance spaces admit a universally Bayes consistent classifier might turn out to be a learning-theoretic question independent of ZFC.

Combinatorial complexity measures for learning have also been studied in computational complexity theory. \cite{papadimitriou1996limited}, motivated by \cite{linial1991results}, determined the complexity of computing the VC-dimension of a finite concept class over a finite domain. While they argue that this problem is probably not NP-complete, \cite{papadimitriou1996limited} proved its completeness for the complexity class \LOGNP, a logarithmically-restricted version of NP. \cite{shinohara1993complexity} obtained a similar completeness result. \cite{frances1998optimal} then, by reduction to computing the VC-dimension, established the \LOGNP-hardness of computing the Littlestone dimension. \cite{schaefer1999deciding} later showed that a variant of the above problem, namely that of computing the VC-dimension of a class described by a polynomial-sized circuit, is $\Sigma_3^p$-complete. Computing the Littlestone dimension from a circuit description is \PSPACE-complete~\cite{schaefer2000deciding}. Extending a result by \cite{schaefer1999deciding}, \cite{mossel2002complexity} determined the complexity of a promise version of approximating the VC-dimension of a class associated to a polynomial-size circuit. More recently, \cite{manurangsi2017inapproximability} has proved nearly tight quasi-polynomial time lower bounds for approximating the VC-dimension and the Littlestone dimension, assuming the randomized Exponential Time Hypothesis.

\subsection{Structure of the Paper}
We use Section \ref{SctBasicsLearningTheory} to introduce our different learning scenarios. 
Section \ref{SctUndecidableNontrivialityFiniteness} contains our main constructions, which we use to reprove the folklore results of undecidability of non-triviality and finitenes, here for function classes. 
In Section \ref{SctUndecidableLearnability}, we demonstrate that our constructions directly yield undecidability of learnability in different learning scenarios. \ref{SbSctUndecidablePACClassification} focuses on PAC binary classification, Subsection \ref{SbSctUndecidableTeaching} considers teacher-learner interactions, and Subsection \ref{SbSctUndecidableOnlineLearning} discusses both uniform and universal online learning.
We conclude with an outlook and open questions in Section \ref{SctConclusion}. Full proofs appear either directly in the text or in Appendix \ref{AppendixProofs}. Appendices \ref{SctGoedelPreliminaries} and \ref{SctTuringPreliminaries} contain standard definitions and results related to formal systems and computability that are used in the main text.

\section{Preliminaries on Learning Theory}\label{SctBasicsLearningTheory}

\subsection{PAC Binary Classification and the VC-Dimension}\label{SbSctPreliminariesVCDim}

We start by recalling one of the most influential learning models for binary classification:

\begin{definition}[Probably approximately correct binary classification \cite{Valiant.1984}]\label{DffPAC}
Let $\mathcal{X}$ be some space, write $\mathcal{Z}=\mathcal{X}\times\{ 0,1\}$. Let $\mathcal{G}\subset\{0,1\}^\mathcal{X}$, and let $D$ be a probability distribution on $\mathcal{Z}$. A map $\mathcal{A}:\bigcup_{m=1}^\infty\mathcal{Z}^m\to\{ 0,1\}^\mathcal{X}$, $S\mapsto h_S$, is a \emph{probably approximately correct (PAC)} learner for $\mathcal{G}$ if there exists a function $m:(0,1)^2\to \IN_{\geq 1}=\{1,2,3,\ldots\}$ such that, given $\varepsilon,\delta\in (0,1)$, if $m\geq m(\varepsilon,\delta)$, then, with probability $\geq 1-\delta$ with respect to repeated sampling of $S\sim D^m$, it holds that $$\mathbb{P}_{(x,y)\sim D} [h_S(x)\neq y]\leq\varepsilon + \inf\limits_{g\in\mathcal{G}} \mathbb{P}_{(x,y)\sim D} [g(x)\neq y].$$
\end{definition}

The PAC learners of interest are \emph{polynomial PAC learners}, for which the sample size $m(\varepsilon,\delta)$ can be chosen to depend polynomially on $\nicefrac{1}{\varepsilon}$ and $\log\left(\nicefrac{1}{\delta}\right)$. Here, the ``polynomial'' refers to the sample size only, not to the runtime. If $\mathcal{G}$ admits a polynomial PAC learner, we call $\mathcal{G}$ \emph{PAC learnable}.

For the scenario of binary classification, whether there exists a polynomial PAC learner can be understood in terms of a combinatorial quantity of the function class under consideration.

\begin{definition}[VC-dimension \cite{Vapnik.1971}]\label{DffVCDim}
Let $\mathcal{G}\subseteq\lbrace 0,1\rbrace^\mathcal{X}$. The Vapnik-Chervonenkis dimension, abbreviated as VC-dimension, of $\mathcal{G}$ is defined to be 
\begin{align*}
\VC(\mathcal{G}):=\sup\lbrace n\in\IN~|~\exists S\subseteq\mathcal{X}:\lvert S\rvert=n~\wedge ~ \left\lvert\mathcal{G}\rvert_{S}\right\rvert=2^n\rbrace,
\end{align*}
with $\mathcal{G}\rvert_{S}$ the restriction of $\mathcal{G}$ to $S$.
If $S\subseteq\mathcal{X}$ is s.t.~$\left\lvert\mathcal{G}\rvert_{S}\right\rvert=2^{\lvert S\rvert}$, we say that $S$ is shattered by $\mathcal{G}$.
\end{definition} 

Under suitable measurability assumptions on the function class $\mathcal{G}$, we have the following

\begin{theorem}[Fundamental theorem of binary classification (see, e.g., \cite{ShalevShwartz.2019})]\label{ThmFundamentalTheoremBinaryClassification}
Let $\mathcal{G}\subset\{0,1\}^\mathcal{X}$. % be a binary-valued function class. 
%There exists a polynomial PAC learner for $\mathcal{G}$ 
$\mathcal{G}$ is PAC learnable if and only if $\operatorname{VCdim}(\mathcal{G})<\infty$.
\end{theorem}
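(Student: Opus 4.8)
The plan is to prove the two implications separately along the classical route, assuming throughout the measurability hypotheses that make quantities such as $\sup_{g\in\mathcal{G}}\lvert L_D(g)-L_S(g)\rvert$ measurable functions of the sample (e.g.\ $\mathcal{G}$ countable), where for a distribution $D$ on $\mathcal{Z}=\mathcal{X}\times\{0,1\}$ and a sample $S$ I write $L_D(g):=\mathbb{P}_{(x,y)\sim D}[g(x)\neq y]$ and $L_S(g):=\tfrac{1}{\lvert S\rvert}\sum_{(x,y)\in S}\mathbf{1}[g(x)\neq y]$.

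\textbf{Finite VC-dimension implies PAC learnability.} Let $d:=\VC(\mathcal{G})<\infty$. The key step is a uniform convergence bound: there is a universal constant $C$ such that $\mathbb{P}_{S\sim D^m}\bigl[\sup_{g\in\mathcal{G}}\lvert L_D(g)-L_S(g)\rvert>\varepsilon\bigr]\leq\delta$ whenever $m\geq C\varepsilon^{-2}\bigl(d\log(1/\varepsilon)+\log(1/\delta)\bigr)$. I would prove this by the standard symmetrization (ghost-sample) argument: introduce a second independent sample $S'$ of size $m$; bound the deviation probability by (up to a constant) the probability that $\lvert L_{S'}(g)-L_S(g)\rvert$ is large for some $g$; then condition on the multiset $S\cup S'$ and randomize only over the assignment of its $2m$ points between $S$ and $S'$. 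Conditioned on $S\cup S'$, the functions of $\mathcal{G}$ induce at most $\lvert\mathcal{G}\rvert_{S\cup S'}\rvert$ distinct behaviours, which by the Sauer--Shelah lemma is at most $\sum_{i=0}^{d}\binom{2m}{i}=O\bigl((2m)^{d}\bigr)$; a Hoeffding-type bound for sampling without replacement applied to each of these finitely many behaviours, together with a union bound, gives the claim. (A sharper chaining argument removes the $\log(1/\varepsilon)$ factor, but the crude bound already suffices below.) Given uniform convergence, empirical risk minimization $\mathcal{A}\colon S\mapsto h_S\in\arg\min_{g\in\mathcal{G}}L_S(g)$ is a PAC learner: on the event $\sup_{g}\lvert L_D(g)-L_S(g)\rvert\leq\varepsilon/2$ one has $L_D(h_S)\leq L_S(h_S)+\varepsilon/2\leq\inf_{g}L_S(g)+\varepsilon/2\leq\inf_{g}L_D(g)+\varepsilon$. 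Since the resulting sample size $m(\varepsilon,\delta)=O\bigl(\varepsilon^{-2}(d\log(1/\varepsilon)+\log(1/\delta))\bigr)$ is polynomial in $1/\varepsilon$ and $\log(1/\delta)$, $\mathcal{G}$ is PAC learnable in the sense of Definition~\ref{DffPAC}.

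\textbf{PAC learnability implies finite VC-dimension.} I prove the contrapositive by a No-Free-Lunch argument. Suppose $\VC(\mathcal{G})=\infty$ and let $\mathcal{A}$ be any candidate learner with sample-size function $m(\cdot,\cdot)$. Fix absolute constants $\varepsilon_0,\delta_0$ (any $\varepsilon_0,\delta_0$ small enough, e.g.\ $\varepsilon_0=\delta_0=\tfrac{1}{10}$, work), set $k:=m(\varepsilon_0,\delta_0)$, and pick $C\subseteq\mathcal{X}$ with $\lvert C\rvert=2k$ shattered by $\mathcal{G}$. For each of the $2^{2k}$ functions $f\colon C\to\{0,1\}$ — each realized on $C$ by some member of $\mathcal{G}$, so that $\inf_{g\in\mathcal{G}}L_{D_f}(g)=0$ — let $D_f$ draw $x$ uniformly from $C$ and output $(x,f(x))$. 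Averaging over a uniformly random $f$ and over $S\sim D_f^{k}$: the learner sees at most $k$ of the $2k$ points of $C$, and on an unseen point the label under a uniformly random $f$ is a fair coin independent of $S$, so $\mathbb{E}_f\mathbb{E}_S[L_{D_f}(h_S)]\geq\tfrac{1}{2}\cdot\tfrac{k}{2k}=\tfrac14$. Hence there is a fixed $f^\ast$ with $\mathbb{E}_S[L_{D_{f^\ast}}(h_S)]\geq\tfrac14$, and since $L_{D_{f^\ast}}(h_S)\in[0,1]$, a reverse-Markov inequality yields $\mathbb{P}_{S\sim D_{f^\ast}^{k}}[L_{D_{f^\ast}}(h_S)\geq\varepsilon_0]\geq\tfrac{1/4-\varepsilon_0}{1-\varepsilon_0}>\delta_0$. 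As $\inf_{g}L_{D_{f^\ast}}(g)=0$, this contradicts $\mathcal{A}$ being a PAC learner for $\mathcal{G}$ at $(\varepsilon_0,\delta_0)$. Therefore PAC learnability forces $\VC(\mathcal{G})<\infty$.

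\textbf{Main obstacle.} The technical heart is the uniform convergence bound in the first direction, and within it the symmetrization step combined with the Sauer--Shelah lemma: one must pass from the (possibly infinite) class $\mathcal{G}$ to its finitely many restrictions to a $2m$-point sample and control the combinatorial blow-up by a polynomial in $m$, which is precisely where finiteness of $\VC(\mathcal{G})$ enters. The two-$\varepsilon$ ERM argument, the counting in the No-Free-Lunch step, and the verification that the sample complexity is polynomial are comparatively routine. The measurability assumptions are needed exactly to guarantee that the supremum over $\mathcal{G}$ in the uniform convergence statement is measurable.
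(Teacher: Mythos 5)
The paper does not prove Theorem~\ref{ThmFundamentalTheoremBinaryClassification}; it cites it from the literature (e.g.\ Shalev-Shwartz and Ben-David) and merely records, in the surrounding text, the measurability caveats under which the equivalence holds. Your proposal supplies a correct proof via the classical route: symmetrization plus Sauer--Shelah plus a union bound for the uniform-convergence/ERM direction, and a No-Free-Lunch averaging argument over a shattered set of size $2m(\varepsilon_0,\delta_0)$ for the converse. The quantitative details check out (the reverse-Markov step with $\varepsilon_0=\delta_0=\tfrac{1}{10}$ gives $\tfrac{1/4-1/10}{9/10}=\tfrac16>\tfrac1{10}$, and $\varepsilon^{-2}(d\log(1/\varepsilon)+\log(1/\delta))$ is indeed polynomial in $1/\varepsilon$ and $\log(1/\delta)$ as Definition~\ref{DffPAC} requires), and you correctly flag that measurability of the empirical-process supremum is what the paper's countability assumption buys. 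Since there is no in-paper proof to compare against, the only thing to note is that you have reconstructed the standard argument the cited source gives; nothing is missing.
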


Among the assumptions on $\mathcal{G}$ that guarantee the equivalence in Theorem \ref{ThmFundamentalTheoremBinaryClassification} are that $\mathcal{G}$ be image admissible Suslin, universally separable, well behaved, or countable \cite{dudley1978central, Pollard.1984, blumer1989learnability, pestov2011pac}.
The function classes considered in this paper are all countable, so Theorem \ref{ThmFundamentalTheoremBinaryClassification} applies. 
Therefore, when studying PAC learnability, we focus on studying finiteness of the VC-dimension. Interestingly, \cite{arunachalam2018optimal} found the VC-dimension to characterize quantum PAC learnability in the same way. Therefore, our undecidability results carry over to quantum PAC learnability.

\subsection{Teaching Problems and the Teaching Dimension}\label{SbSctPreliminariesTDim}
We now turn our attention to a different learning problem. The differences to the PAC model are two-fold. The source of the training data is now a benevolent teacher who knows the function to be learned. And, instead of requiring the learner to approximate the unknown function with high probability, the unknown function must be exactly identified. To help the learner identify the target function, the teacher has to provide a training data set that uniquely characterizes it. The difficulty of the learning/teaching problem is then captured by the worst case size of a smallest such training data set. This is made formal in the following

\begin{definition}[Teaching sets and the teaching dimension \cite{Goldman.1995}]\label{DffTeachingDim}
Let $\mathcal{G}\subset\{0,1\}^\mathcal{X}$, $g\in \mathcal{G}$. A set $S=\{(x_i,y_i)\}_{i=1}^N\subset\mathcal{X}\times\{0,1\}$, $N\in\IN\cup\{\infty\}$, is a \emph{teaching set} for $g$ in $\mathcal{G}$ if $g(x_i)=y_i$ for all $(x_i,y_i)\in S$ and for every $\tilde{g}\in\mathcal{G}\setminus\{g\}$ there exists $(x_j,y_j)\in S$ such that $\tilde{g}(x_j)\neq y_j$. That is, $g$ is the unique concept in $\mathcal{G}$ that is consistent with the labelled data $S$.

The \emph{teaching dimension} of $\mathcal{G}$ is the worst case size of a minimal teaching set, i.e.,
\begin{align*}
\operatorname{Tdim}(\mathcal{G}):=\sup\limits_{g\in\mathcal{G}}\inf\{\lvert S\rvert~|~S \textrm{ is a teaching set for }g\}.
\end{align*}
\end{definition}

We consider a learning/teaching problem for a class $\mathcal{G}$ to be solvable if $\operatorname{Tdim}(\mathcal{G})<\infty$. Note that we will use this notion specifically for $\mathcal{X}=\IN$. This is non-standard. Usually, $\mathcal{X}$ is assumed to be finite so that the teaching dimension is automatically finite.

If the teacher and the learner are allowed to make additional assumptions about the respectively other party's strategy, more refined notions of teaching dimensions should be used (see \cite{Zilles.2011} for an overview). We, however, restrict our attention to the simplest complexity measure for teaching tasks, namely the one in Definition \ref{DffTeachingDim}.

\subsection{Online Learning and Littlestone Trees and Dimension}\label{SbSctPreliminariesLDim}

In online learning, we consider a game between two players, a learner $L$ and an adversary $A$, both of which know the function class $\mathcal{G}\subseteq\{0,1\}^\mathcal{X}$. The game consists of infinitely many rounds. Round $t\in\mathbb{N}_{\geq_1}$ consists of three steps: First, $A$ chooses a ``question'' $x_t\in\mathcal{X}$. Second, $L$ guesses a label $\hat{y}_t\in\{0,1\}$. Third, $A$ reveals the true label $y_t\in\{0,1\}$ to $L$. Crucially, $A$ must ensure that the sequence of true labels can actually be realized within $\mathcal{G}$. That is, the produced sequence $\left((x_t,y_t)\right)_{t=1}^\infty$ must be such that, for every $t\in\mathbb{N}_{\geq 1}$, there exists a function $g\in\mathcal{G}$ with $g(x_s)=y_s$ for all $1\leq s\leq t$. Note: $A$ does not have to pick a fixed $g\in\mathcal{G}$ in advance. Instead $A$ can choose the true labels adaptively, based on the actions of $L$ and $A$ in previous rounds.

The goal of $L$ is to make as few mistakes as possible, where we say that $L$ makes a mistake in round $t\in\mathbb{N}_{\geq 1}$ if $\hat{y}_t\neq y_t$. Conversely, $A$ wants to make the number of mistakes made by $L$ as large as possible. Note that, while we can also interpret teaching problems as two-player games, the role of the second player is quite different. A teacher is seen as benevolent and has the same goal as the learner. In contrast, an adversary's goal is exactly opposite to that of the learner.

We consider two variants of the online learning problem. On the one hand, we work in the scenario of \emph{universal online learning}, recently introduced in \cite{Bousquet.2020}. We say that $\mathcal{G}\subseteq\{0,1\}^\mathcal{X}$ is \emph{universally online learnable} if there exists an adaptive strategy $\hat{y}_t = \hat{y}_t(x_1,y_1,\ldots,x_{t-1},y_{t-1},x_t)$ for $L$ such that, for any adversary $A$, $L$ makes only finitely many mistakes in the above game. On the other hand, we also formulate results in the uniform mistake bound model of online learning, which we refer to as \emph{uniform online learning}. We say that $\mathcal{G}\subseteq\{0,1\}^\mathcal{X}$ is \emph{uniformly online learnable} if there exist a $d\in\IN$ and an (adaptive) strategy $\hat{y}_t = \hat{y}_t(x_1,y_1,\ldots,x_{t-1},y_{t-1},x_t)$ for $L$ such that, for any adversary $A$, $L$ makes at most $d$ mistakes in the above game.

Both whether a class is universally or uniformly online learnable can be understood in terms of so-called \emph{Littlestone trees}.

\begin{definition}[Littlestone trees \cite{Littlestone.1988, Bousquet.2020}]\label{DffLittlestoneTree}
A set of points $\{x_{\mathbf{v}}\}_{\mathbf{v}\in\{0,1\}^k,0\leq k <d}\subset\mathcal{X}$ is a \emph{Littlestone tree of depth} $d\leq\infty$ of a binary-valued function class $\mathcal{G}\subseteq\{0,1\}^\mathcal{X}$ if, for every $(y_i)_{i=0}^d$ and for every $0\leq n<d$, there exists $g\in\mathcal{G}$ such that $g(x_{y_0\ldots y_k})=y_{k+1}$ holds for all $0\leq k\leq n$. 
We say that $\mathcal{G}$ \emph{has an infinite Littlestone tree} if there exists a Littlestone tree of depth $\infty$ of $\mathcal{G}$.
\end{definition}

A Littlestone tree of $\mathcal{G}$ is a complete binary tree in which the nodes are labelled by points in $\mathcal{X}$ and the edges are labelled by $0$ or $1$ in such a way that for every path of finite length, starting from the root of the tree, there is a function in $\mathcal{G}$ that labels all nodes along the path according to the respectively outgoing edges. Note that the definition is only concerned with finite paths, even for an infinite Littlestone tree. 
%Nevertheless, the existence of an infinite Littlestone tree is not guaranteed by the existence of arbitrarily deep Littlestone trees.

The relation between universal online learnability and the (non-)existence of infinite Littlestone trees is summarized in the following

\begin{theorem}[{\cite[Theorem $3.1$]{Bousquet.2020}}]\label{ThmUniversalOnlineLearningNoInfiniteLittlestoneTrees}
$\mathcal{G}\subseteq\{0,1\}^\mathcal{X}$ is universally online learnable iff $\mathcal{G}$ does not have an infinite Littlestone tree.
\end{theorem}

Going from ``universal'' to ``uniform'' on the level of Littlestone trees corresponds to requiring a uniform bound on the depth of all Littlestone trees of a class. This gives rise to

\begin{definition}[Littlestone dimension \cite{Littlestone.1988}]
Let $\mathcal{G}\subseteq\lbrace 0,1\rbrace^\mathcal{X}$. The Littlestone dimension of $\mathcal{G}$ is defined to be 
\begin{align*}
    \operatorname{Ldim}(\mathcal{G}) 
    := \sup\left\{d\in\mathbb{N}_0~|~\mathcal{G}\textrm{ has a Littlestone tree of depth } d \right\}.
\end{align*}
\end{definition}

Note that, if $\mathcal{G}$ has an infinite Littlestone tree, then $\operatorname{Ldim}(\mathcal{G})=\infty$. The converse, however, is not true, as $\operatorname{Ldim}(\mathcal{G})=\infty$ also holds if $\mathcal{G}$ has Littlestone trees of arbitrarily large depth but no infinite Littlestone tree.

The Littlestone dimension characterizes uniform online learnability according to the following

\begin{theorem}[{\cite[Theorem $3$]{Littlestone.1988}}] \label{ThmUniformOnlineLearningFiniteLDim}
$\mathcal{G}\subseteq\{0,1\}^\mathcal{X}$ is uniformly online learnable with at most $d\in\mathbb{N}$ mistakes iff $\operatorname{Ldim}(\mathcal{G})\leq d$. In particular, $\mathcal{G}\subseteq\{0,1\}^\mathcal{X}$ is uniformly online learnable iff $\operatorname{Ldim}(\mathcal{G})<\infty$.
\end{theorem}

\section{Undecidability of Non-Triviality and Finiteness}\label{SctUndecidableNontrivialityFiniteness}

\subsection{Gödel Undecidability}\label{SbSctGoedelUndecidableNontrivialityFiniteness}

For the purpose of this subsection, let $F$ denote a recursively enumerable formal system in which infinitely many different theorems can be proved. (See Definition \ref{DffRecursivelyEnumerableFormalSystems} for a definition of ``recursively enumerable.'') Let $\varphi$ be a primitive recursive enumeration of the theorems provable in $F$. Here, we think of theorems being ``different'' in a symbolic way. That is, two theorems are the same if and only if they are the exact same sequence of symbols from the alphabet available in $F$. This, in turn, is equivalent to the two theorems having the same Gödel number in a fixed Gödel numbering. 
%E.g., $p\wedge q$ and $q\wedge p$ are two different theorems for our purposes.

Also, we will denote by $E^2:\IN\to\IN\times\IN$ a primitive recursive enumeration of $\IN^2$. That is, $E^2$ is a total bijective function such that both component functions $E^2_i:\IN\to\IN$, $i=1,2$, are primitive recursive and such that the inverse $(E^2)^{-1}:\IN\times\IN\to\IN$ is primitive recursive. The existence of such an $E^2$ can, e.g., be proved using so-called pairing functions.

We begin by defining our main object of study for this subsection.

\begin{definition}\label{DffGoedelFctClass}
Let $F$, $\varphi$, $E^2$ be as above. We denote the space of finitely supported sequences with elements in $\{0,1\}$ by $c_c (\{ 0,1\}):=\{(b(k))_{k\in\IN}\in\{0,1\}^{\IN}~|~b(k)\neq 0\textrm{ only finitely often}\}$. For such a finitely supported $a=(a(k))_{k\in\IN}\in c_c (\{ 0,1\})$, define the function $g_a : \IN\to\{ 0,1\}$ via
\begin{align*}
g_a(n)=\begin{cases} a(n) \quad&\textrm{if }\varphi(E^2_1 (n))=\neg \varphi(E^2_2 (n)) \\ 0 &\textrm{else}\end{cases},
\end{align*}
and the function class $\mathcal{G}_F:= \{g_a\}_{a\in c_c(\{0,1\})}.$
\end{definition}

In this formulation, we think of $a$ as a parameter, determining an element of $\mathcal{G}_F$, and we define the class $\mathcal{G}_F$ in terms of a prescription for evaluating the function with parameter $a$ on an input $x$.
Here, the equality $\varphi(E^2_1 (n))=\neg \varphi(E^2_2 (n))$ is to be understood as the symbolic equality between the theorem with Gödel number $\varphi(E^2_1 (n))$ and the negation of the theorem with Gödel number $\varphi(E^2_2 (n))$. Equivalently, we require equality of the corresponding Gödel numbers.
In words, $\mathcal{G}_F$ is the class of all $\{0,1\}$-valued finitely supported sequences that have non-zero entries only on natural numbers indexing a pair of inconsistent theorems in $F$.

We first observe that the class $\mathcal{G}_F$ ``collapses'' to a single function, the zero function, if and only if the underlying formal system $F$ is consistent.

\begin{proposition}\label{PrpConsistencyCollapseFctClass}
$F$ is consistent iff $\mathcal{G}_F=\{ 0\}$.
\end{proposition}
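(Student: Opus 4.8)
The plan is to prove both directions by simply unwinding the definition of $g_a$ and using the two facts that $\varphi$ enumerates \emph{exactly} the theorems provable in $F$ and that $E^2$ is a bijection onto $\IN^2$. As a preliminary observation I would note that the zero sequence belongs to $c_c(\{0,1\})$, so $g_0=0\in\mathcal{G}_F$ no matter what $F$ is; hence proving $\mathcal{G}_F=\{0\}$ amounts to proving that $g_a=0$ for every $a\in c_c(\{0,1\})$.

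For the implication ``$F$ consistent $\Rightarrow\mathcal{G}_F=\{0\}$'' I would argue contrapositively. Suppose $\mathcal{G}_F\neq\{0\}$; then there is $a\in c_c(\{0,1\})$ with $g_a\neq 0$, so $g_a(n)=1$ for some $n\in\IN$. By the definition of $g_a$, the value $g_a(n)$ can be nonzero only when the guard $\varphi(E^2_1(n))=\neg\varphi(E^2_2(n))$ holds. Both $\varphi(E^2_1(n))$ and $\varphi(E^2_2(n))$ lie in the range of $\varphi$ and are therefore theorems of $F$; since one is (symbolically) the negation of the other, $F$ proves some formula together with its negation, i.e.\ $F$ is inconsistent.

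For the converse ``$F$ inconsistent $\Rightarrow\mathcal{G}_F\neq\{0\}$'', suppose $F$ proves both $\psi$ and $\neg\psi$ for some formula $\psi$. Since $\varphi$ enumerates all theorems of $F$, there are $i,j\in\IN$ with $\varphi(i)=\neg\psi$ and $\varphi(j)=\psi$, so in particular $\varphi(i)=\neg\varphi(j)$. As $E^2$ is bijective, put $n:=(E^2)^{-1}(i,j)$, so that $E^2_1(n)=i$ and $E^2_2(n)=j$, and therefore the guard $\varphi(E^2_1(n))=\neg\varphi(E^2_2(n))$ is satisfied. Taking $a\in c_c(\{0,1\})$ to be the indicator sequence of the singleton $\{n\}$ (which has finite support, hence lies in $c_c(\{0,1\})$), we get $g_a(n)=a(n)=1$, so $g_a\in\mathcal{G}_F$ with $g_a\neq 0$, whence $\mathcal{G}_F\supsetneq\{0\}$.

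I do not anticipate a real obstacle: the statement is essentially the translation of the logical equivalence ``inconsistency $=$ provability of some formula and its negation'' through the coding supplied by $\varphi$ and $E^2$. The only points that need a little care are (i) the passage from ``the guard holds for some $n$'' to ``$\varphi(i)=\neg\varphi(j)$ for some pair $(i,j)$'' and back, which is exactly where bijectivity (in particular surjectivity) of $E^2$ is used, and (ii) checking that the witnessing sequence produced in the inconsistent case genuinely lies in $c_c(\{0,1\})$, which is immediate since it has singleton support.
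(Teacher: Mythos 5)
Your proof is correct and follows the same route as the paper's one-line argument, which likewise rests on surjectivity of $E^2_1,E^2_2$ and the fact that $\varphi$ enumerates exactly the provable theorems; you have simply written out the two directions that the paper leaves implicit.
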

\begin{proof}
This follows from the construction of the function class because $E^2$ is surjective and the range of $\varphi$ consists exactly of all Gödel numbers of theorems provable in $F$.
\end{proof}

In contrast, if the underlying formal system $F$ is inconsistent, then $\mathcal{G}_F$ is highly complex:

\begin{theorem}\label{ThmInconsistencyHighComplexity}
    If $F$ is inconsistent, there exists a sequence $(n_i)_{i=1}^\infty$ such that $\{n_1,\ldots,n_N\}$ is shattered by $\mathcal{G}_F$ for all $N\in\mathbb{N}$.
\end{theorem}

For the proof, we first recall that ``anything can be deduced from a contradiction,'' also known as ``ex falso quodlibet.''
\begin{proposition}\label{PrpExFalsoQuodlibet}
    Let $F$ be an inconsistent formal system. Let $q$ be a theorem in $F$. Then both $q$ and $\neg q$ can be proved in $F$.
    \end{proposition}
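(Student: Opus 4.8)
The plan is to recognize Proposition~\ref{PrpExFalsoQuodlibet} as the classical principle of explosion (\emph{ex falso quodlibet}) and to derive it directly from the definition of inconsistency fixed in Appendix~\ref{SctGoedelPreliminaries}. Unwinding that definition, $F$ being inconsistent means that there is some sentence $p$ in the language of $F$ with $F\vdash p$ and $F\vdash\neg p$. Since $q$ is by hypothesis a theorem of $F$, the assertion ``$F\vdash q$'' holds by definition, so the only thing left to establish is ``$F\vdash\neg q$''.

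To derive $F\vdash\neg q$ I would use that the proof calculus underlying $F$ contains classical propositional logic, hence proves every propositional tautology and is closed under modus ponens. The formula $\neg p\to(p\to\neg q)$ is a propositional tautology (an instance of the valid schema $\neg\alpha\to(\alpha\to\beta)$), so $F\vdash\neg p\to(p\to\neg q)$. One application of modus ponens with $F\vdash\neg p$ yields $F\vdash p\to\neg q$, and a second application with $F\vdash p$ yields $F\vdash\neg q$. Together with $F\vdash q$ this is exactly the assertion of the proposition.

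I do not expect a genuine obstacle here; the only care needed is to be explicit about which proof calculus and which formulation of ``inconsistent'' are in force, which is why I would simply appeal to the conventions of Appendix~\ref{SctGoedelPreliminaries} rather than redevelop them. I would, however, add a brief remark that the very same two-line argument, carried out with an arbitrary well-formed formula $\psi$ in place of $q$, gives $F\vdash\psi$ and $F\vdash\neg\psi$, so that an inconsistent $F$ proves every formula. This is more than enough for the use made of the proposition in the proof of Theorem~\ref{ThmInconsistencyInfiniteVCDim}: since $F$ has infinitely many distinct theorems (as assumed throughout this subsection), the proposition produces infinitely many theorems $q$ of $F$ for which $\neg q$ is also a theorem, which is precisely what makes the condition ``$\varphi(E^2_1(n))=\neg\varphi(E^2_2(n))$'' in Definition~\ref{DffGoedelFctClass} hold for infinitely many $n$ and hence forces $\VC(\mathcal{G}_F)=\infty$.
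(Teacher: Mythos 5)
Your proof is correct and is essentially the paper's argument: both unwind inconsistency to a pair $p$, $\neg p$ of provable statements and then derive $\neg q$ via a propositional tautology plus modus ponens (the paper uses $p\to(q\to p)$ followed by contraposition, you use the explosion schema $\neg\alpha\to(\alpha\to\beta)$ directly, which is a cosmetic difference). Your observation that $F\vdash q$ already holds by hypothesis is a slight simplification over the paper, which redundantly re-derives it by symmetry.
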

\begin{proof}
    See Appendix \ref{AppendixProofs}.
\end{proof}

With this we can now prove Theorem \ref{ThmInconsistencyHighComplexity}.

\begin{proof}[Proof of Theorem \ref{ThmInconsistencyHighComplexity}]
    As $F$ is inconsistent and infinitely many different theorems can be proved in $F$, by ``ex falso quodlibet'' there are infinitely many $n\in\IN$ such that $\varphi (E^2_1 (n))=\neg\varphi (E^2_2 (n))$, because $E_1^2$ and $E_2^2$ are surjective and the range of $\varphi$ consists exactly of all Gödel numbers of theorems provable in $F$.
    
    Let $N\in\IN$. Then, by the above, there exist pairwise distinct $n_1,\ldots,n_N\in\IN$ such that $\varphi (E^2_1 (n_i))=\neg\varphi (E^2_2 (n_i))$ for all $1\leq i\leq N$. Let $b\in\{ 0,1\}^N$ be arbitrary. Define $a_b\in c_c(\{0,1\})$ as
    \begin{align*}
        a_b (n_i) = b(i)~\textrm{for }1\leq i\leq N ,\quad a_b(n) = 0~\textrm{for } n\in\IN\setminus\{n_1,\ldots,n_N\}.
    \end{align*}
    Then we clearly have $g_{a_b} (n_i) = b(i)$ for all $1\leq i\leq N$. Therefore, we have shown $\mathcal{G}_F\rvert_{\{n_1,\ldots,n_N\}} = \{0,1\}^{\{n_1,\ldots,n_N\}}$. In other words, $\{n_1,\ldots,n_N\}$ is shattered by $\mathcal{G}_F$.
\end{proof}

\begin{remark}\label{RmkTrivialGoedelClass}
    There is a na\"ive way of constructing a function class that satisfies the same properties as the ones just established for $\mathcal{G}_F$. 
    Namely, given $F$, we could define
    \begin{align*}
        \tilde{\mathcal{G}}_F:=
        \begin{cases}
        \{ 0\}\quad &\textrm{ if } F\textrm{ is consistent}\\
        c_c(\{0,1\}) &\textrm{ else }
        \end{cases}.
    \end{align*}
    This definition of $\tilde{\mathcal{G}}_F$, however, lacks the constructive aspect of Definition \ref{DffGoedelFctClass}, which comes with a concrete instruction for how to evaluate any element of the function class on any given input.
    Thus, whereas we can understand $F\mapsto\mathcal{G}_F$ as a computable mapping (see Corollary \ref{CrlGoedelMappingComputable}), the same is not the case for $F\mapsto\tilde{\mathcal{G}}_F$. 
    %Hence, our procedure for constructing $\mathcal{G}_F$ from $F$ has a desirable property that the assignment $F\mapsto\tilde{\mathcal{G}}_F$ would not guarantee.
\end{remark}

If the formal system $F$ is capable of expressing the class $\mathcal{G}_F$ and its triviality, we can combine Proposition~\ref{PrpConsistencyCollapseFctClass} and Theorem~\ref{ThmInconsistencyHighComplexity} with Gödel's second incompleteness theorem to obtain:

\begin{corollary}\label{CrlGoedelUndecidabilityNonTriviality}
    Assume that $F$ is a recursively enumerable and consistent formal system that contains elementary arithmetic, that can reason about $\mathcal{G}_F$, and such that infinitely many different theorems can be proved in $F$. 
    Then $\mathcal{G}_F = \{0\}$, but this cannot be proved in $F$.
\end{corollary}
\begin{proof}
    Assume for contradiction that the statement $\mathcal{G}_F = \{0\}$ can be proved in $F$. With Proposition~\ref{PrpConsistencyCollapseFctClass} and Theorem~\ref{ThmInconsistencyHighComplexity}, we have given a proof that this implies consistency of $F$. If this proof can be expressed in the formal system $F$, $F$ proves its own consistency. This contradicts Gödel's second incompleteness theorem.
\end{proof}

Corollary \ref{CrlGoedelUndecidabilityNonTriviality} shows that the constructive description of $\mathcal{G}_F$ via a rule for evaluating any of its elements on any possible input makes it impossible to detect when $\mathcal{G}_F$ becomes trivial. This constitutes a limitation of such a kind of access to a function class.

Now we come to the second relevant observation about the class $\mathcal{G}_F$: Not only is it a computable function class, but even the mapping $F\mapsto\mathcal{G}_F$ is computable. We first prove the slightly weaker result that $\mathcal{G}_F$ is a computable function class in the sense of Definition \ref{DffComputableFunctionClass}:

\begin{theorem}\label{ThmGoedelBinaryFctClassComputable}
    Assume that $F$ is a recursively enumerable formal system. Then the class $\mathcal{G}_F$ is computable.
\end{theorem}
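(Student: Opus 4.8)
The plan is to verify the class against Definition~\ref{DffComputableFunctionClass}, i.e.\ to exhibit a \emph{single} total computable function $U\colon\IN\times\IN\to\{0,1\}$ and a computable enumeration $(a_i)_{i\in\IN}$ of $c_c(\{0,1\})$ such that $U(i,\cdot)=g_{a_i}$ for every $i$ and $\mathcal{G}_F=\{\,U(i,\cdot)\mid i\in\IN\,\}$. First I would deal with the index set. By definition of compact support, every $a\in c_c(\{0,1\})$ is eventually $0$ and hence determined by a finite initial segment; fixing a standard primitive recursive bijection between $\IN$ and the set of finite binary strings yields a primitive recursive enumeration $(a_i)_{i\in\IN}$ of $c_c(\{0,1\})$ for which the evaluation map $(i,k)\mapsto a_i(k)$ is primitive recursive. (This is just the statement that $c_c(\{0,1\})$ is recursively enumerable, and in fact primitive recursively so.) Since $\{\,a_i\mid i\in\IN\,\}=c_c(\{0,1\})$, we automatically get $\{\,g_{a_i}\mid i\in\IN\,\}=\mathcal{G}_F$, so the only thing left is to show that $U(i,n):=g_{a_i}(n)$ is computable.

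For the latter I would simply read off an algorithm from Definition~\ref{DffGoedelFctClass}. On input $(i,n)$: compute $E^2_1(n)$ and $E^2_2(n)$ (primitive recursive by the assumptions on $E^2$); compute the Gödel numbers $\varphi(E^2_1(n))$ and $\varphi(E^2_2(n))$ (primitive recursive, since $\varphi$ is a primitive recursive enumeration of the provable theorems --- importantly, we only ever \emph{evaluate} $\varphi$, never decide membership in its range); decode these two Gödel numbers back into symbol strings over the alphabet of $F$ using the fixed Gödel numbering, which is effectively invertible (see Appendix~\ref{SctGoedelPreliminaries}); form the string obtained by prefixing the negation symbol to the second decoded string; and test whether that string equals the first. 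If the test succeeds, output $a_i(n)$ (computable from $(i,n)$ by the previous paragraph); otherwise output $0$. Every step here is total and effective, so $U$ is total computable; keeping track of the encodings, $U$ is in fact primitive recursive. Combined with $\{\,g_{a_i}\mid i\in\IN\,\}=\mathcal{G}_F$, this gives that $\mathcal{G}_F$ is a computable function class.

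The only step that is more than bookkeeping --- and hence the place I expect to spend the most care --- is the claim that the predicate ``$\varphi(E^2_1(n))=\neg\,\varphi(E^2_2(n))$'' is decidable. This rests on two standard facts about the syntax of $F$: that the Gödel numbering is computable with computable inverse on its range, so that one can recover the formula coded by a given number, and that the syntactic operation of negation on formulas, together with equality of codes (equivalently, equality of finite symbol strings over the finite alphabet of $F$), is computable. Both are assembled in the preliminaries on formal systems, and once they are available the argument above is immediate. Finally, the stronger statement that the assignment $F\mapsto\mathcal{G}_F$ is computable (Corollary~\ref{CrlGoedelMappingComputable}) would follow from the same construction by additionally exhibiting $\varphi$ uniformly from a recursive enumeration of the theorems of $F$, which the definition of ``recursively enumerable formal system'' is tailored to provide.
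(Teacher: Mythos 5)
Your proposal is correct and follows essentially the same route as the paper's proof: use the primitive recursive enumeration of $c_c(\{0,1\})$ from Lemma~\ref{LmmRecursiveEnumerabilityCompactlySupportedBinarySequences} to index the class, and observe that the predicate $\varphi(E^2_1(n))=\neg\varphi(E^2_2(n))$ is primitive recursive because $E^2_1$, $E^2_2$, $\varphi$ are primitive recursive and the Gödel numbering makes syntactic negation and code equality effectively checkable. Your explicit remark that $\varphi$ is only evaluated, never inverted or tested for range membership, is a useful clarification but not a departure from the paper's argument.
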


As a first step towards proving this result, we observe that the sequence space $c_c(\{0,1\})$ used for indexing the class can be recursively enumerated.

\begin{lemma}\label{LmmRecursiveEnumerabilityCompactlySupportedBinarySequences}
    There exists a primitive recursive function $C:\IN\times\IN\to\{ 0,1\}$ that enumerates $c_c(\{0,1\})$, i.e., such that $c_c(\{ 0,1\})=\{ n\mapsto C(m,n)~|~m\in\IN\}$. 
\end{lemma}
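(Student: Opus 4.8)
The plan is to realise every eventually-zero binary sequence as the sequence of binary digits of a natural number, and to take $C$ to be the corresponding bit-extraction function. Concretely, I would set
\[
C(m,n) := \left\lfloor \frac{m}{2^{n}} \right\rfloor \bmod 2 ,
\]
i.e.\ $C(m,n)$ is the $n$-th digit in the binary expansion of $m$. (If one's convention has $\IN$ start at $1$ rather than $0$, replace $2^{n}$ by $2^{n-1}$ and $m$ by $m-1$, or simply observe that the index set of the enumeration may be taken to be $\IN_0$ without loss of generality, since $\IN$ and $\IN_0$ are primitive recursively isomorphic; this cosmetic point affects nothing below.)

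Next I would verify the two inclusions that together say ``$C$ enumerates $c_c(\{0,1\})$''. For $\{n\mapsto C(m,n)\mid m\in\IN\}\subseteq c_c(\{0,1\})$: fixing $m$, we have $\lfloor m/2^{n}\rfloor=0$ as soon as $2^{n}>m$, hence $C(m,n)=0$ for all sufficiently large $n$, so $n\mapsto C(m,n)$ has compact support. For the reverse inclusion: given $a\in c_c(\{0,1\})$, pick $N$ with $a(k)=0$ for all $k\geq N$ and put $m:=\sum_{k<N} a(k)\,2^{k}$; computing binary digits of $m$ gives $C(m,n)=a(n)$ for every $n$, so $a$ lies in the enumerated set. (In fact this argument shows that $m\mapsto(n\mapsto C(m,n))$ is a bijection between $\IN_0$ and $c_c(\{0,1\})$, which is more than the lemma asks.)

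Finally I would argue that $C$ is primitive recursive. Exponentiation $(a,b)\mapsto a^{b}$, integer division $(a,b)\mapsto\lfloor a/b\rfloor$ (with the standard convention at $b=0$, never invoked here since $2^{n}\geq 1$), and the remainder $(a,b)\mapsto a\bmod b$ are all primitive recursive, and the class of primitive recursive functions is closed under composition; as $C$ is built from these by composition, it is primitive recursive. The standard facts used are exactly the basic examples and closure properties collected in Appendix~\ref{SctTuringPreliminaries}, so there is no real obstacle here: the lemma is essentially the remark that ``finite binary string'' and ``natural number written in binary'' are the same object, together with the fact that reading off a digit is primitive recursive.
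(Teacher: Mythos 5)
Your proof is correct and takes essentially the same approach as the paper: both define $C(m,n)$ as the $n$-th binary digit of $m$ (your formula $\lfloor m/2^{n}\rfloor \bmod 2$ is just an explicit closed form for the paper's verbal description), verify compact support and surjectivity in the same way, and appeal to primitive recursiveness of exponentiation and basic arithmetic under composition.
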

\begin{proof}
    See Appendix \ref{AppendixProofs}.
\end{proof}

With this ingredient at hand, we can prove Theorem \ref{ThmGoedelBinaryFctClassComputable}.

\begin{proof}[Proof of Theorem \ref{ThmGoedelBinaryFctClassComputable}]
    According to Definition \ref{DffComputableFunctionClass}, we want to find a total computable function $G_F:\IN\times\IN\to\{ 0,1\}$ such that $\mathcal{G}_F=\{n\mapsto G_F(m,n)~|~m\in\IN\}$. We define
    \begin{align*}
    G_F(m,n):=
    \begin{cases}
    C(m,n) \quad &\textrm{ if } \varphi(E^2_1 (n))=\neg \varphi(E^2_2 (n))\\
    0 &\textrm{ else} 
    \end{cases}.
    \end{align*}
    Since $C$ recursively enumerates $c_c(\{0,1\})$, we indeed have $\mathcal{G}_F=\{n\mapsto G_F(m,n)~|~m\in\IN\}$. It remains to show that $G_F$ is a total computable function. As $C$ is total computable, even primitive recursive by Lemma \ref{LmmRecursiveEnumerabilityCompactlySupportedBinarySequences}, it suffices to show that the predicate $\varphi(E^2_1 (n))=\neg \varphi(E^2_2 (n))$ is total computable.
    
    To this end, recall that $E^2_1$, $E^2_2$ and $\varphi$ are primitive recursive. Thus, we only have to show that, given the Gödel numbers of two theorems, checking whether the theorem corresponding to the first number is the negation of the theorem corresponding to the second number can be done in a computable manner. This is even possible in a primitive recursive manner simply by how Gödel numbers are constructed.
    %Take the first Gödel number, compute the Gödel number of the negation of the theorem corresponding to the first Gödel number, and then check whether this equals the second Gödel number.
\end{proof}

Note that our proof of Theorem \ref{ThmGoedelBinaryFctClassComputable} even shows that $\mathcal{G}_F$ is primitive recursive if we define a primitive recursive class of functions analogously to Definition \ref{DffComputableFunctionClass}.
The proof tells us more about the construction of $\mathcal{G}_F$ with respect to computability. Not only is the function class $\mathcal{G}_F$ computable for every formal system $F$. (This is also true for $\tilde{\mathcal{G}}_F$.) But we even see that the assignment $F\mapsto \mathcal{G}_F$ is computable in the following sense:

\begin{corollary}\label{CrlGoedelMappingComputable}
    There exists a partial computable function $\mathbb{G}:\IN^3\to\IN$ such that $\mathcal{G}_F=\{\IN\ni n\mapsto \mathbb{G}(\varphi,m,n)~|~m\in\IN\}$ for any recursively enumerable formal system $F$ whose theorems are enumerated by the primitive recursive function $\varphi:\IN\to\IN$.
\end{corollary}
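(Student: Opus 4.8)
The plan is to upgrade the proof of Theorem \ref{ThmGoedelBinaryFctClassComputable} so that the enumeration $\varphi$ of provable theorems appears as an \emph{input} rather than as a fixed parameter. In other words, I want to exhibit a single algorithm that, given a (code for a) primitive recursive function $\varphi$ together with indices $m,n\in\IN$, outputs the value $G_F(m,n)$ defined in the proof of Theorem \ref{ThmGoedelBinaryFctClassComputable}. The natural route is via a universal/interpreter construction: fix a standard Gödel numbering of the partial recursive functions and let $U$ be a universal partial computable function, so that $U(e,x)$ simulates the $e$-th partial recursive function on input $x$. Then define
\begin{align*}
\mathbb{G}(e,m,n):=
\begin{cases}
C(m,n) \quad & \textrm{if } U(e,E^2_1(n)) = \neg\, U(e,E^2_2(n))\\
0 & \textrm{else,}
\end{cases}
\end{align*}
where, as in the proof of Theorem \ref{ThmGoedelBinaryFctClassComputable}, ``$=\neg$'' abbreviates the primitive recursive predicate that checks, on two Gödel numbers of theorems, whether the first codes the negation of the second, and $C$ is the primitive recursive enumeration of $c_c(\{0,1\})$ from Lemma \ref{LmmRecursiveEnumerabilityCompactlySupportedBinarySequences}. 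Writing $e=\varphi$ for an index of the primitive recursive function enumerating the theorems of $F$, we get $U(\varphi,\cdot)=\varphi(\cdot)$ as total functions, so $\mathbb{G}(\varphi,m,n)=G_F(m,n)$ for all $m,n$, and hence $\mathcal{G}_F=\{\IN\ni n\mapsto\mathbb{G}(\varphi,m,n)\mid m\in\IN\}$ by exactly the argument in the proof of Theorem \ref{ThmGoedelBinaryFctClassComputable}.

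The steps, in order, are: (1) recall the enumeration theorem / existence of a universal partial computable function $U:\IN^2\to\IN$ for the chosen Gödel numbering (this is standard and I would cite Appendix \ref{SctTuringPreliminaries}); (2) recall from Lemma \ref{LmmRecursiveEnumerabilityCompactlySupportedBinarySequences} that $C$ is primitive recursive, and from the proof of Theorem \ref{ThmGoedelBinaryFctClassComputable} that the ``negation-of'' predicate on Gödel numbers of theorems is primitive recursive; (3) define $\mathbb{G}$ by the displayed case distinction and observe it is partial computable, being built from $U$, the composition with the primitive recursive functions $E^2_1,E^2_2,C$, the primitive recursive negation predicate, and a (partial) case distinction — all of which are closed under composition within the partial computable functions; (4) specialize $e$ to an index $\varphi$ of the primitive recursive theorem enumeration of a recursively enumerable $F$, note that then all the relevant calls $U(\varphi,\cdot)$ halt with the correct values, so $\mathbb{G}(\varphi,\cdot,\cdot)$ agrees with $G_F$; (5) invoke the correctness argument of Theorem \ref{ThmGoedelBinaryFctClassComputable} verbatim to conclude $\mathcal{G}_F=\{n\mapsto\mathbb{G}(\varphi,m,n)\mid m\in\IN\}$.

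I expect no serious obstacle here; this is essentially a ``parametrize the earlier construction'' observation, and the only thing to be a little careful about is bookkeeping of the computational model. The one point worth a sentence of justification is why $\mathbb{G}$ may legitimately be taken to be merely \emph{partial} computable: for a general index $e$ the simulated function need not be total, so $U(e,E^2_1(n))$ may diverge and $\mathbb{G}(e,m,n)$ be undefined; this is harmless because the statement only asserts the equality $\mathcal{G}_F=\{n\mapsto\mathbb{G}(\varphi,m,n)\mid m\in\IN\}$ for indices $\varphi$ of primitive recursive (hence total) theorem enumerations, and for those $\mathbb{G}(\varphi,\cdot,\cdot)$ is automatically total. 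If one instead wanted a total $\mathbb{G}$, one could not have it in general (by a standard undecidability-of-totality argument), which is precisely why the corollary is phrased with a partial computable $\mathbb{G}$. A remark to this effect, mirroring Remark \ref{RmkTrivialGoedelClass}, would round off the argument, but is not needed for the proof itself.
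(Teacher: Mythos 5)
Your proposal is correct and follows essentially the same route as the paper's own proof sketch: represent $\varphi$ by its index with respect to a universal machine, use that index to evaluate the predicate $\varphi(E^2_1(n))=\neg\varphi(E^2_2(n))$, and then re-run the argument of Theorem \ref{ThmGoedelBinaryFctClassComputable}. Your explicit discussion of why $\mathbb{G}$ can only be \emph{partial} computable (indices of non-total functions may cause the simulation to diverge) is a useful precision that the paper only touches on in the remark following the corollary, but it is the same construction.
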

\begin{proof}[Proof sketch]
    As $\varphi$ is primitive recursive, it is in particular computable. Thus, we can represent it via its code with respect to our universal Turing machine. With this code, we can compute the predicate $\varphi(E^2_1 (n))=\neg \varphi(E^2_2 (n))$ and the Corollary is proved just like Theorem \ref{ThmGoedelBinaryFctClassComputable}.
\end{proof}

Together, Theorem \ref{ThmGoedelBinaryFctClassComputable} and Corollary \ref{CrlGoedelMappingComputable} provide an advantage of our construction over the ``trivial'' $\tilde{G}_F$ in Remark \ref{RmkTrivialGoedelClass}. Given a recursively enumerable system in terms of an explicit primitive recursive enumeration $\varphi$ of theorems, they provide us with an explicit algorithmic procedure for evaluating elements of the function class $\mathcal{G}_F$ and thereby with an explicit description of $\mathcal{G}_F$ obtained by fixing certain inputs of the concrete function $\mathbb{G}$. In this way, Definition \ref{DffGoedelFctClass} can be viewed as defining $\mathcal{G}_F$ in terms of a function that upon input of a ``parameter'' $a$ and an input $x$ outputs the value $g_a (x)$, and whose action is computable from the underlying formal system $F$. 

\subsection{Turing Undecidability}\label{SbSctTuringUndecidableNontrivialityFiniteness}

We now change the perspective and ask whether there is a general-purpose algorithmic procedure for deciding whether a binary-valued function class is simple, here understood as finite, or has high complexity. We begin by describing what such a hypothetical algorithm should do: It would take as input the code of a computable binary-valued function class $\mathcal{G}$, coming from a uniformly computable family of such classes. It should output $0$ if $\mathcal{G}$ shatters an infinite set and $1$ if $\mathcal{G}$ is finite. Note that such an algorithm would work ``only'' for computable function classes since it is exactly the computability which allows us to provide their code as input. As such, the notion of a computable function class introduced in Definition \ref{DffComputableFunctionClass} is crucial, since it allows us to identify such a class with a computable function, whose code can then be fed as input to an algorithm.

We show that such an algorithm does not exist by reduction to the halting problem. The ``encoding'' of the halting problem into the finiteness versus high complexity of a function class is achieved by the following construction.

\begin{definition}\label{DffTuringFctClass}
    Let $M$ be a finite-state Turing machine. For a finitely supported sequence $a=(a(k))_{k\in\IN}\in c_c(\{0,1\})$, define the function $h_a : \IN\to\{ 0,1\}$ via
    \begin{align*}
        h_a(n)=\begin{cases} a(n) \quad&\textrm{if } M\textrm{ does not halt after } \leq n\textrm{ steps on the empty input} \\ 0 &\textrm{else}\end{cases},
    \end{align*}
    and the function class $\mathcal{H}_{M}:= \{h_a\}_{a\in c_c(\{ 0,1\})}.$
\end{definition}

Similarly to Definition \ref{DffGoedelFctClass}, also Definition \ref{DffTuringFctClass} defines a function class in terms of a prescription for evaluating any specific element of the class on any given input. Below, we will see that this prescription is computable from the underlying Turing machine $M$.

We first provide an alternative expression for the class $\mathcal{H}_{M}$:

\begin{proposition}\label{PrpTuringClassAlternativeForm}
    The class $\mathcal{H}_{M}$ from Definition~\ref{DffTuringFctClass} is given as
    \begin{align*}
        \mathcal{H}_M:=
        \begin{cases} \{0,1\}^{\{0,\ldots,K-1\}}\quad &\textrm{ if } M\textrm{ halts after exactly } K\textrm{ steps on the empty input}\\ c_c(\{0,1\}) &\textrm{ else} \end{cases},
    \end{align*}
    where we think of $\{0,1\}^{\{0,\ldots,K-1\}}$ as being embedded into $\{0,1\}^\IN$ as the first $K$ sequence elements, to which we append zeros.
\end{proposition}

For a reduction to the halting problem, we need to establish two claims. 
First, we need to show that $\mathcal{H}_M$ is computable according to Definition \ref{DffComputableFunctionClass}, so that it makes sense to talk about $\mathcal{H}_M$ as input to a hypothetical algorithm that decides finiteness. Only then will $\mathcal{H}_M$, or more precisely the corresponding function $H_M$, have a code that we can use as input for our hypothetical decision algorithm. 
Second, we need to show that constructing the class $\mathcal{H}_M$ from the Turing machine $M$ can be done in a computable way. That is, we need to prove that the mapping $M\mapsto\mathcal{H}_M$ is computable. We begin by establishing computability of $\mathcal{H}_M$.

\begin{theorem}\label{ThmTuringFctClassComputable}
Let $M$ be a Turing machine. The function class $\mathcal{H}_M$ is computable.
\end{theorem}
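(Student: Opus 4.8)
The plan is to mirror the proof of Theorem \ref{ThmGoedelBinaryFctClassComputable}, simply replacing the consistency predicate by a \emph{bounded}-halting predicate. By Definition \ref{DffComputableFunctionClass}, it suffices to exhibit a total computable function $H_M:\IN\times\IN\to\{0,1\}$ with $\mathcal{H}_M=\{n\mapsto H_M(m,n)\mid m\in\IN\}$. Taking the primitive recursive enumeration $C$ of $c_c(\{0,1\})$ provided by Lemma \ref{LmmRecursiveEnumerabilityCompactlySupportedBinarySequences}, I would set
\begin{align*}
H_M(m,n):=\begin{cases} C(m,n) \quad&\textrm{if } M\textrm{ does not halt after } \leq n\textrm{ steps on the empty input}\\ 0 &\textrm{else}\end{cases}.
\end{align*}
Since $C$ ranges exactly over $c_c(\{0,1\})$, unwinding Definition \ref{DffTuringFctClass} shows $\{n\mapsto H_M(m,n)\mid m\in\IN\}=\{h_a\mid a\in c_c(\{0,1\})\}=\mathcal{H}_M$, so the only remaining task is to check that $H_M$ is total computable.

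As $C$ is primitive recursive, it remains to argue that the predicate ``$M$ does not halt after $\leq n$ steps on the empty input'' is total computable --- in fact primitive recursive in $n$ for fixed $M$. This is the crucial point, and the reason the construction works even though deciding $\VC(\mathcal{H}_M)<\infty$ does not (compare Lemma \ref{LmmHaltingFiniteVCDim}): we only ever ask whether $M$ halts within a prescribed, finite number of steps. The configuration of $M$ after $t$ steps on the empty input is a primitive recursive function of $t$, and testing whether a configuration is a halting one is a purely syntactic check, so the bounded-halting predicate is decidable by straightforward step-by-step simulation. Composing this decidable predicate with the primitive recursive $C$ through a definition by cases yields that $H_M$ is total computable, as Definition \ref{DffComputableFunctionClass} demands.

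I expect no genuine obstacle here: the entire content is the contrast between bounded halting, which is decidable, and the halting problem, which is not, and it is exactly this contrast that lets $\mathcal{H}_M$ be a computable class while its VC-dimension stays uncomputable. As with the remark following Theorem \ref{ThmGoedelBinaryFctClassComputable}, the same argument in fact shows $\mathcal{H}_M$ to be primitive recursive, and --- by representing $M$ through its code with respect to a fixed universal Turing machine and simulating it --- that the assignment $M\mapsto\mathcal{H}_M$ is itself computable, which is the next statement I would isolate (namely Corollary \ref{CrlTuringMappingComputable}).
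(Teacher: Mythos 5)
Your proof is correct and follows the same route as the paper's: it uses the primitive recursive enumeration $C$ of $c_c(\{0,1\})$ from Lemma \ref{LmmRecursiveEnumerabilityCompactlySupportedBinarySequences}, defines $H_M$ via the identical case distinction on the bounded-halting predicate, and argues totality and computability of $H_M$ by observing that the predicate is decidable by step-by-step simulation. Your additional elaboration on why bounded halting is primitive recursive, and the closing remarks about $\mathcal{H}_M$ being primitive recursive and $M\mapsto\mathcal{H}_M$ being computable, match the paper's own surrounding discussion.
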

\begin{proof}
We have already seen in Lemma \ref{LmmRecursiveEnumerabilityCompactlySupportedBinarySequences} that there exists a primitive recursive function $C:\IN\times\IN\to\{ 0,1\}$such that $c_c(\{ 0,1\})=\{ m\mapsto C(m,n)~|~m\in\IN\}$. Therefore, if we define 
\begin{align*}
H_M(m,n) = \begin{cases} C(m,n) \quad &\textrm{ if } M\textrm{ does not halt after } \leq n\textrm{ steps on the empty input}\\0 &\textrm{ else } \end{cases},
\end{align*}
then $\mathcal{H}_M=\{n\mapsto H_M(m,n)~|~m\in\IN\}$. Moreover, $H_M$ is a computable function because is defined from computable functions and a case distinction with a computable predicate. Hence, $\mathcal{H}_M$ is a computable function class according to Definition \ref{DffComputableFunctionClass}.
\end{proof}

Computability of $\mathcal{H}_M$ can be seen more easily: $\mathcal{H}_M$ is either finite and thus trivially computable or it is equal to $c_c(\{0,1\})$ and thus computable by Lemma \ref{LmmRecursiveEnumerabilityCompactlySupportedBinarySequences}. We present the proof above because, similarly to our reasoning in Subsection \ref{SbSctGoedelUndecidableNontrivialityFiniteness}, it already gives us the computability of $M\mapsto\mathcal{H}_M$:

\begin{corollary}\label{CrlTuringMappingComputable}
There exists a partial computable function $\mathbb{H}:\IN^3\to\IN$ such that $\mathcal{H}_M=\{\IN\ni n\mapsto \mathbb{H}(M,m,n)~|~m\in\IN \}$ for any Turing machine $M$.
\end{corollary}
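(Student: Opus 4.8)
The plan is to mimic the proof of Theorem \ref{ThmTuringFctClassComputable}, but now keeping (a code for) the Turing machine $M$ as an explicit input rather than fixing it in advance. First I would define
\[
\mathbb{H}(M,m,n) := \begin{cases} C(m,n) & \textrm{if } M \textrm{ does not halt after } \leq n \textrm{ steps on the empty input}\\ 0 & \textrm{else}\end{cases},
\]
where $C$ is the primitive recursive enumeration of $c_c(\{0,1\})$ from Lemma \ref{LmmRecursiveEnumerabilityCompactlySupportedBinarySequences}. Fixing the first argument to (the code of) $M$ recovers exactly the function $H_M$ from the proof of Theorem \ref{ThmTuringFctClassComputable}, so the asserted identity $\mathcal{H}_M = \{\IN \ni n \mapsto \mathbb{H}(M,m,n) \mid m \in \IN\}$ holds for every Turing machine $M$; it then remains only to argue that $\mathbb{H}$ is partial computable.

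The one point requiring care is that the predicate ``$M$ does not halt after $\leq n$ steps on the empty input'' be computable \emph{uniformly} in the pair $(M,n)$, with $M$ given by its code. This is where the careful setup of the computational model in the preliminaries is used: simulating $M$ on the empty input for $n$ steps via the universal Turing machine and reading off whether a halting configuration has been reached is a primitive recursive operation in $(M,n)$ — essentially Kleene's step-counting (T-)predicate. Composing this decidable case distinction with the primitive recursive function $C$ yields a computable $\mathbb{H}$, and in fact the argument shows $\mathbb{H}$ is primitive recursive on valid codes. As with $\mathbb{G}$ in Corollary \ref{CrlGoedelMappingComputable}, the only reason to call $\mathbb{H}$ merely ``partial'' is that not every natural number codes a legitimate Turing machine; on the set of valid codes it is total in $m$ and $n$.

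I expect the main obstacle to be expository rather than mathematical: one must already have pinned down, in the computability preliminaries, a fixed universal machine and a fixed notion of ``running $M$ for $n$ steps,'' so that the halting-within-$n$-steps predicate is genuinely a total computable function of the code of $M$ together with $n$. Once that bookkeeping is in place — and it is needed anyway for the reduction to the halting problem in Corollary \ref{CrlTuringUndecidabilityFinitenessVCDim} — the corollary follows immediately by the same argument as Theorem \ref{ThmTuringFctClassComputable}, now reading the code of $M$ as data rather than as a fixed parameter.
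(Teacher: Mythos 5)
Your proposal is correct and matches the paper's argument: the paper likewise treats \(\mathcal{H}_M\) as arising from the function \(H_M\) in the proof of Theorem~\ref{ThmTuringFctClassComputable} and simply promotes the code of \(M\) to an explicit first argument, relying on the uniform computability of the bounded-halting predicate via the fixed universal Turing machine. The paper leaves this as an immediate consequence rather than writing it out, but your reasoning is exactly what is intended.
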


The computability of $M\mapsto\mathcal{H}_M$ is crucial for the final step in our proof of Turing undecidability. And it provides an explicit description of the class $\mathcal{H}_M$ obtained by fixing ``input parameters'' of the function $\mathbb{H}$.
This relatively simple proof of computability for the mapping $M\mapsto\mathcal{H}_{M}$ is also the reason why we introduced $\mathcal{H}_M$ as in Definition~\ref{DffTuringFctClass} rather than with the simpler expression from Proposition~\ref{PrpTuringClassAlternativeForm}. 

Now we have everything we need to finish the reduction to the halting problem and thereby our proof of Turing undecidability.

\begin{corollary}\label{CrlTuringUndecidabilityFiniteness}
    There is no Turing machine that, upon input of the code of an arbitrary computable binary-valued function class coming from $\{\mathcal{H}_M\}_{M}$, decides whether that class is finite or equal to $c_c(\{0,1\})$.
\end{corollary}
\begin{proof}
    Assume for contradiction that there is such a Turing machine $M_{\rm Fin}$. Then we could construct a Turing machine for solving the halting problem on the empty input as follows:

    Given as input the code of a Turing machine $M$, compute the code of the corresponding class $\mathcal{H}_{M}$, or, more precisely, the function $H_M$. This step is possible because the code of a concatenation of Turing machines is a primitive recursive function of their respective codes and because the mapping $M\mapsto\mathcal{H}_M$ is computable by Corollary \ref{CrlTuringMappingComputable}. Now feed that code to the Turing machine $M_{\rm Fin}$. If it outputs $1$ output, ``yes, halts,'' otherwise output ``no, doesn't halt.''

    As the halting problem is Turing undecidable, we have reached a contradiction. Therefore, the assumed Turing machine does not exist.
\end{proof}

\begin{remark}\label{RmkAlternativeGödelClass}
    We can imitate the construction of $\mathcal{H}_M$ for formal systems. Namely, with $F$ and $\varphi$ as in Subsection \ref{SbSctGoedelUndecidableNontrivialityFiniteness}, we can define, for $a\in c_c(\{0,1\})$, $$\tilde{g}_a(n)=\begin{cases} a(n) \quad&\textrm{if }\varphi(1),\ldots,\varphi(n)\textrm{ are consistent} \\ 0 &\textrm{else}\end{cases},$$
    and the function class $\tilde{\mathcal{G}}_F:= \{\tilde{g}_a\}_{a\in c_c(\{0,1\})}.$ Then, $\tilde{\mathcal{G}}_F= c_c(\{0,1\})$ if $F$ is consistent and $\tilde{\mathcal{G}}_F = \{0,1\}^{0,\ldots,K-1}$ for some finite $K\in\mathbb{N}$ if $F$ is inconsistent. Also, the mapping $F\mapsto\tilde{\mathcal{G}}_F$ is computable. We see that, for a suitable $F$, whether $\tilde{\mathcal{G}}_F$ is finite or equals $c_c(\{0,1\})$ is Gödel undecidable.
\end{remark}

\begin{remark}\label{RmkRice}
    One can also derive our Turing undecidability result using Rice's theorem. Informally, Rice's theorem states that any non-trivial semantic property of Turing machines is Turing undecidable \cite{rice1953classes}. 
    Combined with our computable mapping $M\mapsto\mathcal{H}_M$, Rice's Theorem yields uncomputability of the indicator function of the non-trivial index set $\{ M\in\mathbb{N} ~|~ \lvert \mathcal{H}_M\rvert <\infty \}$, thus providing an alternative (possibly more elegant but arguably less concrete) proof for Corollary~\ref{CrlTuringUndecidabilityFiniteness} once Corollary~\ref{CrlTuringMappingComputable} is established. 
    Here, an index set is a set of codes of partial computable functions such that whenever it contains a code of a partial computable function, then it contains all codes for that function.
    Note that this reasoning can in principle be generalized as follows: If $\IN\ni M\mapsto\mathcal{F}_M\subseteq\{0,1\}^{\IN}$ is a computable mapping from (codes of) Turing machines to computable hypothesis classes such that there exist Turing machines $M_f,M_i\in\IN$ with $\lvert \mathcal{F}_{M_f}\rvert<\infty$ and $\lvert \mathcal{F}_{M_i}\rvert=\infty$ and such that whenever $M_1,M_2\in\IN$ are two codes for the same Turing machine, then $\lvert \mathcal{F}_{M_1}\rvert<\infty$ holds if and only if $\lvert\mathcal{F}_{M_2}\rvert<\infty$, then Rice's Theorem implies that the indicator function of $\{ M\in\mathbb{N} ~|~ \lvert\mathcal{F}_M\rvert<\infty \}$ is uncomputable. 
\end{remark}

\begin{remark}\label{RmkConnectionTuringGoedel}
    There is a standard way, explained, e.g, in Section $2$ of \cite{Poonen.2014}, of deriving a Gödel undecidability result from a Turing undecidability result. For suitable (recursively enumerable, sufficiently expressive, and sound) formal systems, this allows us to derive from Corollary \ref{CrlTuringUndecidabilityFiniteness} the existence of a function class for which finiteness is Gödel undecidable.
    
    The advantage of our reasoning in Subsection \ref{SbSctGoedelUndecidableNontrivialityFiniteness}:
    Even without a soundness assumption, Corollary \ref{CrlGoedelUndecidabilityNonTriviality} provides us with a concrete example of a function class for which finiteness (and, as we show later, also finiteness of the VC-dimension) is Gödel undecidable. In that sense, the relationship between Corollary \ref{CrlGoedelUndecidabilityNonTriviality} and the Gödel undecidability result just derived from Corollary \ref{CrlTuringUndecidabilityFiniteness} is similar to the relationship between Gödel's second and (Rosser's \cite{Rosser.1936} strengthening of) the first incompleteness theorem.
    
    In fact, starting from a Turing undecidability result, one can derive a Gödel undecidabiltiy result akin to the second incompleteness theorem, compare the essays \cite{oberhoff2019incompleteness, cubitt2021spectralincompleteness}. In our case, starting from Corollary \ref{CrlTuringUndecidabilityFiniteness}, given a recursively enumerable formal system $F$, one can explicitly describe a Turing machine $M$, depending on $F$, such that neither $\lvert \mathcal{H}_M\rvert<\infty$ nor $\lvert\mathcal{H}_M\rvert=\infty$ can be proved in $F$. This $\mathcal{H}_M$ is then a concrete function class for which finiteness of $\lvert\mathcal{H}_M\rvert$ is Gödel undecidable in $F$ and thus gives a result comparable to Corollary \ref{CrlGoedelUndecidabilityNonTriviality}. We have presented our results on independence of the axioms of a formal system and on uncomputability separately, so that these parts of the paper can be read independently from one another.
\end{remark}

\section{Undecidability of Learnability}\label{SctUndecidableLearnability}

\subsection{Undecidability in PAC Binary Classification}\label{SbSctUndecidablePACClassification}

We now apply the results of Section \ref{SctUndecidableNontrivialityFiniteness} to PAC binary classification. First, aiming towards Gödel undecidability, observe the following immediate consequence of Proposition \ref{PrpConsistencyCollapseFctClass} and Theorem \ref{ThmInconsistencyHighComplexity}:

\begin{corollary}\label{CrlEquivalenceConsistencyFinitenessVCDim}
    $F$ is consistent iff $\VC (\mathcal{G}_F)<\infty$.
\end{corollary}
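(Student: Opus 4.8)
The plan is to deduce this directly from the two preceding results, Corollary \ref{CrlConsistencyTrivialVCDim} and Theorem \ref{ThmInconsistencyInfiniteVCDim}, by splitting into the two implications of the ``iff''. For the forward direction, I would assume $F$ is consistent; then Corollary \ref{CrlConsistencyTrivialVCDim} gives $\VC(\mathcal{G}_F) = 0$, which is in particular finite. (Underlying this is Proposition \ref{PrpConsistencyCollapseFctClass}: consistency forces $\mathcal{G}_F = \{0\}$, a singleton, so no nonempty set can be shattered.)

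For the reverse direction, I would argue by contraposition: assume $F$ is inconsistent. Then Theorem \ref{ThmInconsistencyInfiniteVCDim} applies and yields $\VC(\mathcal{G}_F) = \infty$, so $\VC(\mathcal{G}_F)$ is not finite. Since the two cases ``$F$ consistent'' and ``$F$ inconsistent'' are exhaustive and mutually exclusive, combining the two implications gives the claimed equivalence $F \text{ consistent} \iff \VC(\mathcal{G}_F) < \infty$.

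There is essentially no obstacle here: all the genuine content — constructing $\mathcal{G}_F$ so that the defining condition $\varphi(E^2_1(n)) = \neg\varphi(E^2_2(n))$ holds for no $n$ when $F$ is consistent and for infinitely many $n$ (via ``ex falso quodlibet'' together with surjectivity of $E^2_1, E^2_2$ and the assumption that infinitely many distinct theorems are provable) when $F$ is inconsistent — has already been carried out in Proposition \ref{PrpConsistencyCollapseFctClass}, Corollary \ref{CrlConsistencyTrivialVCDim}, and Theorem \ref{ThmInconsistencyInfiniteVCDim}. The corollary is merely the bookkeeping step that packages these as an equivalence, so the ``proof'' is a one-line citation of the two facts followed by the case distinction on consistency of $F$.
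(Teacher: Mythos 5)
Your proposal is correct and matches the paper's own reasoning exactly: the corollary is obtained immediately by combining Corollary \ref{CrlConsistencyTrivialVCDim} (consistency gives $\VC(\mathcal{G}_F)=0$) with Theorem \ref{ThmInconsistencyInfiniteVCDim} (inconsistency gives $\VC(\mathcal{G}_F)=\infty$) via the case split on consistency of $F$.
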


If the formal system $F$ is capable of expressing both the class $\mathcal{G}_F$ and the finiteness of its VC-dimension, we can again invoke Gödel's second incompleteness theorem. 

\begin{corollary}\label{CrlGoedelUndecidabilityFinitenessVCDim}
Assume that $F$ is a recursively enumerable and consistent formal system that contains elementary arithmetic, that can reason about $\mathcal{G}_F$, and such that infinitely many different theorems can be proved in $F$. Then $\VC (\mathcal{G}_F)<\infty$, but the finiteness of $\VC (\mathcal{G}_F)$ cannot be proved in $F$.
\end{corollary}
% \begin{proof}
%     Assume for contradiction that the statement $\VC (\mathcal{G}_F)<\infty$ can be proved in $F$. In Corollary \ref{CrlEquivalenceConsistencyFinitenessVCDim}, we have given a proof that this implies consistency of $F$. If this proof can be expressed in the formal system $F$, $F$ proves its own consistency. This contradicts Gödel's second incompleteness theorem.
% \end{proof}

Note that the statement of Corollary \ref{CrlGoedelUndecidabilityFinitenessVCDim} is sensitive to the representation of $\mathcal{G}_F$. Namely, if $F$ is consistent, then $\mathcal{G}_F=\{0\}$ by Proposition \ref{PrpConsistencyCollapseFctClass}, and the VC-dimension of the trivial class $\{0\}$ is easily shown to be finite. However, with $\mathcal{G}_F$ computably defined via a function as explained in Subsection \ref{SbSctGoedelUndecidableNontrivialityFiniteness}, $F$ can neither prove that $\mathcal{G}_F$ is trivial nor that its VC-dimension is finite.

Turing undecidability of finiteness of the VC-dimension follows similarly easily. Namely, Proposition \ref{PrpTuringClassAlternativeForm} implies:

\begin{corollary}\label{CrlHaltingFiniteVCDim}
    Let $M$ be a Turing machine. The binary-valued function class $\mathcal{H}_{M}$ satisfies
    \begin{align*}
        \operatorname{VCdim}(\mathcal{H}_{M})
        =\begin{cases} K\quad&\textrm{if }M\textrm{ halts after exactly }K\textrm{ steps on the empty input}\\
        \infty &\textrm{else}\end{cases}.
    \end{align*}
    In particular, $\VC (\mathcal{H}_{M})<\infty$ if and only if $M$ halts on the empty input.
\end{corollary}

Combining this with the computability of $M\mapsto \mathcal{H}_M$ established in Corollary \ref{CrlTuringMappingComputable}, a reduction to the halting problem yields:

\begin{corollary}\label{CrlTuringUndecidabilityFinitenessVCDim}
    There is no Turing machine that, upon input of the code of an arbitrary computable binary-valued function class coming from $\{\mathcal{H}_M\}_{M}$, decides whether that class has finite VC-dimension. In other words, finiteness of the VC-dimension is Turing undecidable.
\end{corollary}

As finiteness of the VC-dimension is equivalent to learnability in PAC binary classification (recall Theorem \ref{ThmFundamentalTheoremBinaryClassification}), Corollaries \ref{CrlGoedelUndecidabilityFinitenessVCDim} and \ref{CrlTuringUndecidabilityFinitenessVCDim} can be rephrased as establishing the Gödel and Turing undecidability of learnability in this setting, respectively.

\begin{remark}\label{RmkFinitenessVCDimVSFinitenessClass}
    Since there are infinite function classes with finite VC-dimension, undecidability of the (in-)finiteness of function classes does not imply undecidability of the (in-)finiteness of the VC-dimension, or vice versa.
    However, as our constructed function classes $\mathcal{G}_F$ ($\mathcal{H}_M$) have a finite VC-dimension if and only if they are finite, which is the case if and only if $F$ is consistent ($M$ halts on the empty input), they simultaneously allow us to prove both undecidability of the (in-)finiteness and undecidability of the (in-)finiteness of the VC-dimension. The same applies to undecidability of (in-)finiteness of the complexity measures considered in the next two subsections.
\end{remark}

\subsection{Undecidability in Teacher-Learner Interactions}\label{SbSctUndecidableTeaching}

Next, we show undecidability results for teacher-learner interactions. Before coming to the teaching dimension itself, we discuss a different problem in teaching. Namely, we ask whether, given a function that can be taught to a learner by a teacher using finitely many examples, we can always prove that this is the case. The answer will turn out to be no, in general.

Take $F$ and $\varphi$ as in Subsection \ref{SbSctGoedelUndecidableNontrivialityFiniteness}. We consider the class of threshold functions on $\IN$ and allow for the possibility of a ``threshold at infinity.'' That is, we consider
\begin{align*}
\mathcal{F}_{\textrm{step}} := \{\IN\ni n\mapsto \operatorname{sgn}(n-k)~|~ k\in\IN\}\cup\{0\},
\end{align*}
where we use the convention $\operatorname{sgn}(x) = \begin{cases} 1\quad&\textrm{if }x\geq 0\\ 0&\textrm{if }x<0 \end{cases}$. Note that $\mathcal{F}_{\textrm{step}}$ consists of computable functions and is a computable class in the sense introduced in Definition \ref{DffComputableFunctionClass}.

We consider the function
\begin{align*}
    f_F:\IN\mapsto \{0,1\},\quad f_F(n) = \begin{cases} 0\quad &\textrm{if }\varphi(1),\ldots,\varphi(n)\textrm{ are consistent}\\ 1&\textrm{else}\end{cases}.
\end{align*}
%Here, $\varphi(1),\ldots,\varphi(n)$ are said to be inconsistent if and only if for some $1\leq i,j\leq n$ we have $\varphi(i)=\neg\varphi(j)$, and consistent otherwise.
Note that the mapping $F\mapsto f_F$, where we think of $F$ as given via the code of the corresponding $\varphi$, is computable. Clearly, $f_F\in\mathcal{F}_{\textrm{step}}$ for any formal system $F$. Therefore we can study whether $f_F$ admits a finite teaching set in the class $\mathcal{F}_{\textrm{step}}$.

\begin{proposition}\label{PrpEquivalenceInconsistencyFinitenessTSet}
$f_F$ admits a finite teaching set in $\mathcal{F}_{\textrm{step}}$ iff $F$ is inconsistent.
\end{proposition}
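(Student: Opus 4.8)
The plan is to analyze the function class $\mathcal{F}_{\textrm{step}}$ and the specific function $f_F$ directly, splitting into the two cases according to consistency of $F$.

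First I would observe the basic structure of $\mathcal{F}_{\textrm{step}}$: it consists of the threshold functions $\mathbf{1}_{[k,\infty)}$ for $k\in\IN$ together with the all-zeros function $0$, which plays the role of the ``threshold at infinity.'' For a threshold function $g_k := (n\mapsto\operatorname{sgn}(n-k))$ with $k\in\IN$, a finite teaching set is easy to exhibit: the two labelled examples $(k-1,0)$ and $(k,1)$ (or just $(0,1)$ when $k=0$) pin down $g_k$ uniquely among all of $\mathcal{F}_{\textrm{step}}$, since any other threshold function or the zero function disagrees with $g_k$ at one of these two points. So every $g_k$ with finite $k$ has teaching dimension at most $2$. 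The interesting case is the zero function $0$: to distinguish $0$ from $g_k$ one needs an example of the form $(n,0)$ with $n\geq k$; hence a set $S$ of labelled examples teaches $0$ in $\mathcal{F}_{\textrm{step}}$ iff $S$ consists only of pairs $(n,0)$ and the set of first coordinates appearing in $S$ is unbounded in $\IN$ — in particular, $0$ admits a \emph{finite} teaching set if and only if $0\notin\mathcal{F}_{\textrm{step}}\setminus\{0\}$ is vacuous, i.e. essentially never, unless the threshold functions are not all present. This is the crux: $0$ has a finite teaching set in $\mathcal{F}_{\textrm{step}}$ precisely when there is a largest threshold, which cannot happen in the full $\mathcal{F}_{\textrm{step}}$ — so I must instead work with the fact that the relevant question is about $f_F$, not about $0$.

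So the argument reduces to identifying which element of $\mathcal{F}_{\textrm{step}}$ the function $f_F$ actually is. If $F$ is consistent, then $\varphi(1),\dots,\varphi(n)$ are consistent for every $n$ (no $\varphi(i)=\neg\varphi(j)$ ever appears, since the range of $\varphi$ is the set of provable theorems and $F$ proves no contradiction), hence $f_F\equiv 0$, i.e. $f_F$ is the zero function. By the analysis above, the zero function has no finite teaching set in $\mathcal{F}_{\textrm{step}}$ (any finite set of $(n,0)$-examples has bounded first coordinates, so some threshold function with a large enough threshold is also consistent with it). Conversely, if $F$ is inconsistent, then by ``ex falso quodlibet'' (Proposition \ref{PrpExFalsoQuodlibet}) some theorem $q$ and its negation $\neg q$ are both provable, so both appear in the enumeration $\varphi$; let $N$ be large enough that $q$ and $\neg q$ both occur among $\varphi(1),\dots,\varphi(N)$. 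Then $\varphi(1),\dots,\varphi(n)$ are inconsistent for all $n\geq N$, so $f_F(n)=1$ for $n\geq N$ and $f_F(n)=0$ for $n<N'$ where $N'\leq N$ is the first index at which an inconsistency appears; thus $f_F = g_{N'}$ is a genuine threshold function with finite threshold, which by the first paragraph admits a teaching set of size at most $2$.

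The main obstacle — really the only subtlety — is getting the ``consistent'' direction right: one must check carefully that the zero function admits \emph{no} finite teaching set in $\mathcal{F}_{\textrm{step}}$, i.e. that for every finite $S\subseteq\IN\times\{0,1\}$ consistent with $0$, there is another member of $\mathcal{F}_{\textrm{step}}$ also consistent with $S$. This is immediate once one notes that such an $S$ can only contain pairs $(n,0)$ (a pair $(n,1)$ is inconsistent with $0$), that the first coordinates form a finite hence bounded set, say bounded by $m$, and that then the threshold function $g_{m+1}=(n\mapsto\operatorname{sgn}(n-(m+1)))$ is also consistent with $S$ since it vanishes on $\{0,\dots,m\}$. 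Combining the two cases gives exactly: $f_F$ admits a finite teaching set in $\mathcal{F}_{\textrm{step}}$ iff $F$ is inconsistent. \qed
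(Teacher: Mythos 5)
Your proof is correct and follows essentially the same route as the paper's: in the consistent case $f_F\equiv 0$ and any finite sample of $(n,0)$-pairs is also consistent with a step function whose threshold exceeds the largest sampled index, while in the inconsistent case $f_F$ equals a genuine finite-threshold step function taught by the two examples straddling the threshold. The extra preliminary paragraph classifying the teaching sets of all of $\mathcal{F}_{\textrm{step}}$ is a harmless elaboration but not a different argument.
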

\begin{proof}
If $F$ is inconsistent, then there exists $k\in\IN$ such that $f_F(n)=\operatorname{sgn}(n-k)$ for all $n\in\IN$. So $f_F$ is the only element of $\mathcal{F}_{\textrm{step}}$ that is consistent with the training data set $\{ (k-1,0),(k,1)\}$. Thus, we have found a teaching set of size $2$ for $f_F$.

If $F$ is consistent, then $f_F\equiv 0$ is the zero-function. So any finite training data set consistent with $f_F$ is of the form $\{(n_i,0) \}_{i=1}^N$ for $n_i\in\IN$, $1\leq i\leq N$, $N\in\IN$. But also the function $\IN\ni n\mapsto \operatorname{sgn}(n-k^\ast)$ with $k^\ast = \max_{1\leq i\leq N} n_i +1$ is an element of $\mathcal{F}_{\textrm{step}}$ that is consistent with such a training data set. So $f_F$ cannot be uniquely identified in $\mathcal{F}_{\textrm{step}}$ by a finite training data set. That is, $f_F$ does not have a teaching set of finite size.
\end{proof}

We see that the teaching dimension of $\mathcal{F}_{\textrm{step}}$ is infinite. %As $\mathcal{F}_{\textrm{step}}$ does not depend on $F$, we cannot hope to get a teaching dimension that depends on $F$. Instead, 
The formal system $F$ determines which element of $\mathcal{F}_{\textrm{step}}$ we consider, and Proposition \ref{PrpEquivalenceInconsistencyFinitenessTSet} states that, if $F$ is consistent, this ``filters out'' precisely the one concept in $\mathcal{F}_{\textrm{step}}$ that does not have a finite teaching set.

If $F$ is capable of expressing the function $f_F$, the class $\mathcal{F}_{\textrm{step}}$, and the (non-)existence of finite teaching sets, we are again in the position to apply Gödel's second incompleteness theorem.

\begin{corollary}\label{CrlGoedelUndecidabilityFiniteTeachingSet}
Assume that $F$ is a recursively enumerable and consistent formal system that contains elementary arithmetic and that can reason about $f_F$ and $\mathcal{F}_{\textrm{step}}$. The function $f_F$ defined above does not have a finite teaching set in $\mathcal{F}_{\textrm{step}}$, but this statement is not provable in $F$.
\end{corollary}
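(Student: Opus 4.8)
The plan is to imitate the argument for Corollary~\ref{CrlGoedelUndecidabilityFinitessVCDim}, now using Proposition~\ref{PrpEquivalenceInconsistencyFinitenessTSet} in place of Corollary~\ref{CrlEquivalenceConsistencyFinitenessVCDim}. First, the stated fact about $f_F$ is immediate: since $F$ is consistent, Proposition~\ref{PrpEquivalenceInconsistencyFinitenessTSet} says that $f_F$ does \emph{not} admit a finite teaching set in $\mathcal{F}_{\textrm{step}}$. So the substance of the corollary is the unprovability of this true statement in $F$.

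For the unprovability, I would argue by contradiction. Suppose $F$ proves the sentence ``$f_F$ has no finite teaching set in $\mathcal{F}_{\textrm{step}}$.'' Consider the direction ``$F$ inconsistent $\Rightarrow$ $f_F$ has a finite teaching set'' of Proposition~\ref{PrpEquivalenceInconsistencyFinitenessTSet}: its proof is elementary, since from a derivation of a contradiction one reads off the least index $k\geq 1$ such that $\varphi(i)=\neg\varphi(j)$ for some $i,j\leq k$, concludes $f_F(n)=\operatorname{sgn}(n-k)$ for all $n$, and exhibits $\{(k-1,0),(k,1)\}$ as a teaching set of size $2$. Because $F$ contains elementary arithmetic and because $\varphi$, $f_F$, the class $\mathcal{F}_{\textrm{step}}$, and the predicate ``$S$ is a teaching set for $f_F$ in $\mathcal{F}_{\textrm{step}}$'' are all arithmetically definable (the first two being $\Sigma_1$, as they are built from the primitive recursive $\varphi$), this reasoning can be carried out inside $F$. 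Hence $F$ proves the implication ``$f_F$ has no finite teaching set $\to \operatorname{Con}(F)$'' (the contrapositive of the above), and together with the assumed proof of the hypothesis this gives $F\vdash\operatorname{Con}(F)$. Since $F$ is a consistent, recursively enumerable formal system containing elementary arithmetic, this contradicts Gödel's second incompleteness theorem. Therefore no proof of ``$f_F$ has no finite teaching set in $\mathcal{F}_{\textrm{step}}$'' exists in $F$.

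The main obstacle, exactly as in Corollary~\ref{CrlGoedelUndecidabilityFinitessVCDim}, is the formalization step: one must be satisfied that the passage from ``$F$ is inconsistent'' through the existence of a concrete threshold index for $f_F$ to an explicit size-$2$ teaching set is provable \emph{in $F$} and not merely in the metatheory. All quantities involved are recursive or bounded, so their basic properties are available in any reasonable arithmetical base theory; as in the earlier corollary I would not dwell on this beyond invoking the hypothesis that $F$ contains elementary arithmetic. One could also, following Remark~\ref{RmkConnectionTuringGoedel}, state the conclusion relative to whichever fragment of arithmetic $F$ is assumed to interpret.
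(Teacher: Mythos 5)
Your proof is correct and follows essentially the same route the paper takes for the analogous Corollary~\ref{CrlGoedelUndecidabilityFinitessVCDim}: derive the statement from Proposition~\ref{PrpEquivalenceInconsistencyFinitenessTSet} (consistency case), then for unprovability formalize the contrapositive ``inconsistency $\Rightarrow$ finite teaching set'' inside $F$ and invoke Gödel's second incompleteness theorem. The paper leaves this corollary proofless precisely because it is the same pattern, so no objections.
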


\begin{remark}\label{RmkTuringUndecidabilityFiniteTeachingSet}
We can use a similar construction to establish an analogous Turing undecidability result. Namely, given a Turing machine $M$, we can define 
\begin{align*}
f_M:\IN\mapsto \{0,1\},\quad f_M(n) = \begin{cases} 0\quad &\textrm{if } M \textrm{ does not halt after }\leq n\textrm{ steps on the empty input}\\ 1&\textrm{else}\end{cases}.
\end{align*}
$f_M$ admits a finite teaching set in $\mathcal{F}_{\textrm{step}}$ if and only if $M$ halts on the empty input. Hence, as the mapping $M\mapsto f_M$ is computable, we conclude that there cannot be a general-purpose algorithm that, upon input of a computable function class and a function in that class, decides whether the function admits a finite teaching set in the class. 
\end{remark}

Next, we study $\mathcal{G}_F$ from the perspective of the teaching dimension. For the purpose of this discussion, $F$, $\varphi$ and $E^2$ are again as in Subsection \ref{SbSctGoedelUndecidableNontrivialityFiniteness}. Our first observation is that also finiteness of the teaching dimension of $\mathcal{G}_F$ can be related to consistency of underlying formal system.

\begin{proposition}\label{PrpEquivalenceConsistencyFinitenessTDim}
$F$ is consistent iff $\operatorname{Tdim}(\mathcal{G}_F)<\infty.$
\end{proposition}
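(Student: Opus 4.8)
The plan is to reuse the dichotomy already established for $\mathcal{G}_F$. By Proposition \ref{PrpConsistencyCollapseFctClass}, consistency of $F$ is equivalent to $\mathcal{G}_F = \{0\}$, and the argument in the proof of Theorem \ref{ThmInconsistencyInfiniteVCDim} shows that if $F$ is inconsistent then the ``active index set'' $A_F := \{n\in\IN : \varphi(E^2_1(n)) = \neg\varphi(E^2_2(n))\}$ is infinite. These two facts are all we need.

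For the ``only if'' direction, suppose $F$ is consistent. Then $\mathcal{G}_F = \{0\}$ by Proposition \ref{PrpConsistencyCollapseFctClass}, so $\mathcal{G}_F$ contains a single concept, and the empty set is vacuously a teaching set for $0$ in $\mathcal{G}_F$ (there is no $\tilde g \in \mathcal{G}_F\setminus\{0\}$ to distinguish). Hence $\operatorname{Tdim}(\mathcal{G}_F) = 0 < \infty$.

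For the ``if'' direction I would argue by contraposition: assume $F$ is inconsistent and show $\operatorname{Tdim}(\mathcal{G}_F) = \infty$ by exhibiting a concept, namely the zero function $0 \in \mathcal{G}_F$, with no finite teaching set. Let $S = \{(n_i,y_i)\}_{i=1}^N \subset \IN\times\{0,1\}$ be any finite labelled set consistent with $0$; then necessarily $y_i = 0$ for all $i$. Since $A_F$ is infinite, choose $m \in A_F \setminus \{n_1,\ldots,n_N\}$ and define $a\in c_c(\{0,1\})$ by $a(m) = 1$ and $a(n) = 0$ for $n\neq m$. Because $m \in A_F$, we have $g_a(m) = a(m) = 1$, so $g_a \neq 0$; and for each $i$, either $n_i \notin A_F$, in which case $g_a(n_i) = 0$ by the ``else'' branch, or $n_i \in A_F$, in which case $g_a(n_i) = a(n_i) = 0$ since $n_i \neq m$. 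Thus $g_a \in \mathcal{G}_F\setminus\{0\}$ agrees with $0$ on all of $S$, so $S$ is not a teaching set for $0$. As $S$ was an arbitrary finite consistent set, $0$ has no finite teaching set, whence $\operatorname{Tdim}(\mathcal{G}_F) = \infty$.

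I do not expect a real obstacle here; the only point needing a little care is that the ``else'' branch of $g_a$ outputs $0$, which is precisely what lets us leave the labels on inactive indices satisfied ``for free'' and concentrate the support of $a$ on a single active index outside $\{n_1,\ldots,n_N\}$. Everything else is bookkeeping that mirrors the shattering construction in the proof of Theorem \ref{ThmInconsistencyInfiniteVCDim}.
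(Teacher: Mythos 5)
Your proposal is correct and follows essentially the same route as the paper: handle the consistent case via the collapse to $\{0\}$, and in the inconsistent case use the infinitude of the ``active'' index set to show some concept cannot be pinned down by finitely many labelled points. The only cosmetic difference is that the paper observes that \emph{every} $g_a\in\mathcal{G}_F$ lacks a finite teaching set, while you exhibit this just for the zero function, which already forces $\operatorname{Tdim}(\mathcal{G}_F)=\infty$; your explicit construction of the distractor $g_a$ supported on a single active index outside the query set is a clean way of making the paper's informal ``arbitrarily large support'' remark precise.
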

\begin{proof}
The ingredients to this proof are similar to those used in proving Corollary \ref{ThmInconsistencyHighComplexity}. See Appendix \ref{AppendixProofs} for details.
\end{proof}

The proof of Proposition \ref{PrpEquivalenceConsistencyFinitenessTDim} shows that, if $F$ is inconsistent, then in fact no element of $\mathcal{G}_F$ has a finite teaching set. This is different from our previous result, where a single function in $\mathcal{F}_{\textrm{step}}$ required teaching sets of infinite size and whether this function was the one characterized by $F$ depended on (in-)consistency.

Again, if $F$ can reason about $\mathcal{G}_F$ and the finiteness of its teaching dimension, we can combine Proposition \ref{PrpEquivalenceConsistencyFinitenessTDim} with Gödel's second incompleteness theorem:

\begin{corollary}\label{CrlGoedelUndecidabilityFinitenessTDim}
Assume that $F$ is a recursively enumerable and consistent formal system that contains elementary arithmetic, that can reason about $\mathcal{G}_F$, and such that infinitely many theorems can be proved in $F$. Then $\mathcal{G}_F$ has finite teaching dimension, but this statement cannot be proved in $F$.
\end{corollary}

Thus, we have shown that also the teaching dimension captures the contrast between the ``collapse'' of $\mathcal{G}_F$ in the consistent case and the ``richness'' of $\mathcal{G}_F$ in the inconsistent case. Therefore, finiteness of the teaching dimension is also Gödel undecidable.

\begin{remark}
If we leave aside questions of computability, we could also consider the following construction: Take $\tilde{\mathcal{F}}_{\textrm{step}}\subseteq\{0,1\}^\IN$ to be the class of proper step functions and consider the class $\{f_F\}\cup \tilde{\mathcal{F}}_{\textrm{step}}$. This class has finite teaching dimension if and only if $F$ is inconsistent.
\end{remark}

We can also view $\mathcal{H}_M$ through the lens of teacher-learner interactions. As before, the first step in our approach consists in relating whether the underlying Turing machine $M$ halts to whether the teaching dimension of $\mathcal{H}_M$ is finite.
\begin{proposition}\label{PrpEquivalenceHaltingFinitenessTDim}
Let $M$ be a Turing machine. The binary-valued function class $\mathcal{H}_M$ satisfies
\begin{align*}
\operatorname{Tdim}(\mathcal{H}_{\textrm{M}})
=\begin{cases} K\quad&\textrm{if }M\textrm{ halts after exactly }K\in\IN\textrm{ steps on the empty input}\\
\infty &\textrm{else}\end{cases}.
\end{align*}
\end{proposition}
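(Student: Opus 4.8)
The plan is to mirror the structure of the proof of Lemma \ref{LmmHaltingFiniteVCDim}, exploiting the fact that $\mathcal{H}_M = \tilde{\mathcal{H}}_M$ has a very explicit description: if $M$ halts after exactly $K$ steps, then $\mathcal{H}_M = \{0,1\}^{\{0,\ldots,K-1\}}$ (embedded into $\{0,1\}^\IN$ by appending zeros); if $M$ never halts, then $\mathcal{H}_M = c_c(\{0,1\})$. So the two cases reduce to computing the teaching dimension of these two concrete classes.

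First I would handle the halting case. Suppose $M$ halts after exactly $K \in \IN_{>0}$ steps on the empty input, so every $h_a \in \mathcal{H}_M$ is supported in $\{0,\ldots,K-1\}$ and $\mathcal{H}_M$ is (the image of) the full Boolean cube $\{0,1\}^{\{0,\ldots,K-1\}}$. For any $g = h_a \in \mathcal{H}_M$, the set $S_g = \{(0,a(0)),(1,a(1)),\ldots,(K-1,a(K-1))\}$ is a teaching set: any $\tilde g \in \mathcal{H}_M$ agreeing with $g$ on all of $\{0,\ldots,K-1\}$ must equal $g$ everywhere, since both are zero on $\IN \setminus \{0,\ldots,K-1\}$. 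Hence $\operatorname{Tdim}(\mathcal{H}_M) \leq K$. For the lower bound, consider $g \equiv 0$ (this is $h_0$): any finite set $S$ of labelled examples consistent with $g$ has the form $\{(n_i,0)\}_{i=1}^N$; if $N < K$, there is some index $j \in \{0,\ldots,K-1\}$ with $j \notin \{n_1,\ldots,n_N\}$, and the function $h_a$ with $a(j)=1$ and $a=0$ elsewhere lies in $\mathcal{H}_M$, is consistent with $S$, and differs from $g$. So every teaching set for $g \equiv 0$ has size $\geq K$, giving $\operatorname{Tdim}(\mathcal{H}_M) \geq K$. (If $K = 0$, i.e.\ $M$ halts immediately, then $\mathcal{H}_M = \{0\}$ and $\operatorname{Tdim} = 0$; the formula holds trivially.)

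For the non-halting case, $\mathcal{H}_M = c_c(\{0,1\})$. Take $g \equiv 0$ again. Any finite consistent set $S = \{(n_i,0)\}_{i=1}^N$ fails to identify $g$ uniquely: with $k^\ast = \max_i n_i + 1$ (or $k^\ast = 0$ if $S = \emptyset$), the function $h_a$ with $a(k^\ast) = 1$ and $a = 0$ elsewhere is in $c_c(\{0,1\}) = \mathcal{H}_M$, is consistent with $S$, and differs from $g$. So $g$ has no finite teaching set, whence $\operatorname{Tdim}(\mathcal{H}_M) = \infty$. I do not anticipate a genuine obstacle here; the only point requiring a little care is the bookkeeping in the halting case — making sure the stated value is exactly $K$ (not $K-1$ or $K+1$) by pinning down both the upper bound (an explicit teaching set of size $K$ for every concept) and the matching lower bound (the zero function needs all $K$ coordinates), and checking the degenerate $K=0$ boundary case so the uniform statement of the proposition is correct. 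The details are routine and can be deferred to Appendix \ref{AppendixProofs} if desired.
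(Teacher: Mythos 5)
Your proof is correct and follows essentially the same route the paper indicates: it exploits the equality $\mathcal{H}_M = \tilde{\mathcal{H}}_M$ from Remark~\ref{RmkAlternativeTuringClass} to reduce to two explicit classes, $\{0,1\}^{\{0,\ldots,K-1\}}$ (padded by zeros) and $c_c(\{0,1\})$, and then computes the teaching dimension directly in each case, with the zero function serving as the hardest-to-teach concept for both the lower bound and the non-halting case. The bookkeeping around $K$ (matching upper/lower bounds and the $K=0$ degenerate case) is handled correctly.
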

\begin{proof}
This follows from Proposition \ref{PrpTuringClassAlternativeForm}.% See Appendix \ref{AppendixProofs} for details.
\end{proof}

As we already know from Corollary \ref{CrlTuringMappingComputable} that $\mathcal{H}_M$ can be computed from the underlying Turing machine $M$, we can again reduce to the halting problem and obtain

\begin{corollary}\label{CrlTuringUndecidabilityFinitenessTDim}
There is no Turing machine that, upon input of the code of an arbitrary computable binary-valued function class coming from $\{\mathcal{H}_M\}_{M}$, decides whether that class has finite teaching dimension. In other words, finiteness of the teaching dimension is Turing undecidable.
\end{corollary}

\subsection{Undecidability in Online Learning}\label{SbSctUndecidableOnlineLearning}

As a final demonstration of the applicability of our constructions to different learning models, we show that universal and uniform online learning are undecidable in the by now familiar two senses.
We have seen in Theorem \ref{ThmUniformOnlineLearningFiniteLDim} that uniform online learnability is equivalent to the Littlestone dimension being finite. Therefore, we again start by relating consistency of the formal system $F$ underlying $\mathcal{G}_F$ to finiteness of the Littlestone dimension of $\mathcal{G}_F$.

\begin{proposition}\label{PrpEquivalenceConsistencyFinitenessLDim}
    $F$ is consistent iff $\operatorname{Ldim}(\mathcal{G}_F)<\infty.$
\end{proposition}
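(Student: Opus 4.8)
The plan is to mirror the proof strategy already used for the VC-dimension in Corollary~\ref{CrlEquivalenceConsistencyFinitenessVCDim}, exploiting the fact that the Littlestone dimension is sandwiched between finiteness of the class and infiniteness of the VC-dimension. Concretely, I would prove the two implications separately. For the forward direction, suppose $F$ is consistent. Then by Proposition~\ref{PrpConsistencyCollapseFctClass} we have $\mathcal{G}_F=\{0\}$, a class consisting of a single function. A one-element function class has no Littlestone tree of depth $\geq 1$ (there is no way for a single function to realize both labels $y_{k+1}=0$ and $y_{k+1}=1$ below any node), so $\operatorname{Ldim}(\mathcal{G}_F)=0<\infty$. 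This is the easy half and is essentially immediate.

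For the converse, I would argue the contrapositive: if $F$ is inconsistent, then $\operatorname{Ldim}(\mathcal{G}_F)=\infty$. Here I would reuse the shattering construction from the proof of Theorem~\ref{ThmInconsistencyInfiniteVCDim}: by ``ex falso quodlibet'' (Proposition~\ref{PrpExFalsoQuodlibet}) and surjectivity of $E_1^2,E_2^2$, there are infinitely many $n\in\IN$ with $\varphi(E_1^2(n))=\neg\varphi(E_2^2(n))$, so for every $N$ there exist pairwise distinct $n_1,\dots,n_N$ at which the sequence value $a(n_i)$ is passed through unchanged to $g_a(n_i)$. Given any target depth $d$, pick such points $n_1,\dots,n_d$ and build a Littlestone tree of depth $d$ by placing the point $x_{\mathbf{v}}:=n_{k+1}$ at every node $\mathbf{v}\in\{0,1\}^k$ (the same index at each level). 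To witness a finite path with edge-labels $y_1,\dots,y_{k+1}$, use $g_{a_b}$ where $a_b\in c_c(\{0,1\})$ is the sequence that is $y_{i}$ at $n_{i}$ for $1\le i\le k+1$ and $0$ elsewhere; then $g_{a_b}(n_i)=y_i$ as required along that path. Since this works for every $d$, $\operatorname{Ldim}(\mathcal{G}_F)=\infty$.

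Combining the two cases gives the equivalence. I expect the main (though still minor) obstacle to be purely bookkeeping: making sure the Littlestone tree definition in Definition~\ref{DffLittlestoneTree} is satisfied with the ``same index per level'' labelling — i.e.\ checking that for each prefix of edge-labels one can produce a single $g\in\mathcal{G}_F$ consistent with all nodes on that prefix, which is exactly the shattering argument already established. Alternatively, one could shortcut the inconsistent case entirely by noting that a shattered set of size $N$ yields a Littlestone tree of depth $N$ (repeat the shattered points along each level), so $\operatorname{Ldim}(\mathcal{G}_F)\ge\VC(\mathcal{G}_F)$, and then invoke Theorem~\ref{ThmInconsistencyInfiniteVCDim} directly; the details of this reduction are the only thing requiring care, and the full argument is deferred to Appendix~\ref{AppendixProofs}.
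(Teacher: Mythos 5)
Your proposal is correct, and the ``shortcut'' you mention at the end — bounding $\operatorname{Ldim}(\mathcal{G}_F)\geq\VC(\mathcal{G}_F)$ and invoking Theorem~\ref{ThmInconsistencyInfiniteVCDim} — is exactly the paper's proof in Appendix~\ref{AppendixProofs}, combined with the same observation that consistency gives $\mathcal{G}_F=\{0\}$ and hence $\operatorname{Ldim}(\mathcal{G}_F)=0$. The explicit Littlestone-tree construction you give first is also valid but is strictly more work than needed.
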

\begin{proof}
    This follows from our results on the VC-dimension and the inequality $\VC\leq\operatorname{Ldim}$. See Appendix \ref{AppendixProofs} for details.
\end{proof}

The Gödel undecidability of uniform online learnability now follows as in Subsection \ref{SbSctUndecidablePACClassification}:

\begin{corollary}\label{CrlGoedelUndecidabilityFinitenessLDim}
    Assume that $F$ is a recursively enumerable and consistent formal system that contains elementary arithmetic, that can reason about $\mathcal{G}_F$, and such that infinitely many theorems can be proved in $F$. Then $\mathcal{G}_F$ has finite Littlestone dimension, but this statement cannot be proved in $F$.
\end{corollary}

We also have a teaching dimension analogue of Proposition \ref{PrpEquivalenceHaltingFinitenessTDim}:

\begin{proposition}\label{PrpHaltingFiniteLDim}
Let $M$ be a Turing machine. The binary-valued function class $\mathcal{H}_{M}$ satisfies
\begin{align*}
\operatorname{Ldim}(\mathcal{H}_{M})
=\begin{cases} K\quad&\textrm{if }M\textrm{ halts after exactly }K\textrm{ steps on the empty input}\\
\infty &\textrm{else}\end{cases}.
\end{align*}
\end{proposition}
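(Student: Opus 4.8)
The plan is to read off $\operatorname{Ldim}(\mathcal{H}_M)$ from the computation of $\VC(\mathcal{H}_M)$ in Lemma~\ref{LmmHaltingFiniteVCDim}, using two standard facts about the Littlestone dimension: the inequality $\VC(\mathcal{G})\leq\operatorname{Ldim}(\mathcal{G})$ (every set shattered by $\mathcal{G}$ yields a Littlestone tree of $\mathcal{G}$ of the same depth, obtained by ordering the shattered set arbitrarily and labelling every node at level $k$ with the $(k+1)$-st point), and the bound $\operatorname{Ldim}(\mathcal{G})\leq\log_2\lvert\mathcal{G}\rvert$ for finite $\mathcal{G}$ (two distinct root-to-leaf paths of a depth-$d$ tree first diverge at a common node, where the two functions realizing them are forced to take opposite values there and hence differ, so a depth-$d$ tree forces $2^d$ pairwise distinct members of $\mathcal{G}$). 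I would then split along the two cases in the statement.

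If $M$ does not halt on the empty input, Lemma~\ref{LmmHaltingFiniteVCDim} gives $\VC(\mathcal{H}_M)=\infty$, so $\operatorname{Ldim}(\mathcal{H}_M)\geq\VC(\mathcal{H}_M)=\infty$; equivalently, recalling $\mathcal{H}_M=\tilde{\mathcal{H}}_M=c_c(\{0,1\})$ from Remark~\ref{RmkAlternativeTuringClass}, the set $\{0,\dots,N\}$ is shattered for every $N$ and hence gives a Littlestone tree of every finite depth. If $M$ halts after exactly $K$ steps, the lower bound is immediate: $\operatorname{Ldim}(\mathcal{H}_M)\geq\VC(\mathcal{H}_M)=K$. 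For the matching upper bound I would again use $\mathcal{H}_M=\tilde{\mathcal{H}}_M$ (Remark~\ref{RmkAlternativeTuringClass}): every element of $\mathcal{H}_M$ is the zero-padding of some $b\in\{0,1\}^{\{0,\dots,K-1\}}$ and all $2^K$ such paddings occur, so $\lvert\mathcal{H}_M\rvert=2^K$, whence $\operatorname{Ldim}(\mathcal{H}_M)\leq\log_2\lvert\mathcal{H}_M\rvert=K$. Combining the two bounds gives $\operatorname{Ldim}(\mathcal{H}_M)=K$, as claimed.

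The only point needing a little care is the second standard fact above, which requires unwinding Definition~\ref{DffLittlestoneTree} with $n=d-1$ so that a single function witnesses an entire length-$d$ path; this is routine and I expect no genuine obstacle. An alternative self-contained route for the upper bound, avoiding the cardinality count, is to argue directly that any root-to-leaf path of a Littlestone tree of $\mathcal{H}_M$ has pairwise distinct node labels (a repeated point would force a realizing function to take both values there) and that no node label can be $\geq K$ (all functions in $\mathcal{H}_M$ vanish there, so the outgoing edge labelled $1$ is unrealizable), so every path has length at most $\lvert\{0,\dots,K-1\}\rvert=K$.
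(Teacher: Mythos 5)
Your proposal is correct and takes essentially the same route as the paper: the paper's own proof also sandwiches $\operatorname{Ldim}(\mathcal{H}_M)$ between $\VC(\mathcal{H}_M)$ and $\log_2\lvert\mathcal{H}_M\rvert$, invoking Lemma~\ref{LmmHaltingFiniteVCDim} for the lower bound and Remark~\ref{RmkAlternativeTuringClass} for the cardinality count $\lvert\mathcal{H}_M\rvert=2^K$ in the halting case. Your fleshing out of the two standard inequalities and the alternative direct upper bound are fine but are just elaborations of the same argument the paper states more tersely.
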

\begin{proof}
    This follows from our results on the VC-dimension and the inequalities $\VC(\mathcal{H}_{M})\leq\operatorname{Ldim}(\mathcal{H}_{M})\leq \log_2\lvert\mathcal{H}_{M}\rvert$. See Appendix \ref{AppendixProofs} for details.
\end{proof}

Relying again on the computability of $M\mapsto\mathcal{H}_M$, we can reduce to the halting problem:

\begin{corollary}\label{CrlTuringUndecidabilityFinitenessLDim}
    There is no Turing machine that, upon input of the code of an arbitrary computable binary-valued function class coming from $\{\mathcal{H}_M\}_{M}$, decides whether that class has finite Littlestone dimension. In other words, finiteness of the Littlestone dimension is Turing undecidable.
\end{corollary}

We have similar results for universal online learning.
Because of Theorem \ref{ThmUniversalOnlineLearningNoInfiniteLittlestoneTrees}, we first establish an equivalence between the formal system $F$ underlying the class $\mathcal{G}_F$ being consistent and $\mathcal{G}_F$ having no infinite Littlestone tree.

\begin{proposition}\label{PrpEquivalenceConsistencyNoInfiniteLittlestoneTree}
    $F$ is consistent iff $\mathcal{G}_F$ does not have an infinite Littlestone tree.
\end{proposition}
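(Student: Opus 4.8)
The plan is to split on whether $F$ is consistent and to piggyback on the structure already used for the VC-dimension in Theorem~\ref{ThmInconsistencyInfiniteVCDim}. If $F$ is consistent, then Proposition~\ref{PrpConsistencyCollapseFctClass} gives $\mathcal{G}_F=\{0\}$, a singleton class, and a singleton class has no Littlestone tree even of depth $1$: with root $x_\emptyset$ and $n=0$ in Definition~\ref{DffLittlestoneTree} one would need a $g\in\mathcal{G}_F$ with $g(x_\emptyset)=1$, which the zero function cannot provide (equivalently, $\operatorname{Ldim}(\mathcal{G}_F)\leq\log_2\lvert\mathcal{G}_F\rvert=0$). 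So this direction is immediate.

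For the converse, assume $F$ is inconsistent. First I would record, exactly as in the proof of Theorem~\ref{ThmInconsistencyInfiniteVCDim}, that the set $S:=\{\,n\in\IN : \varphi(E^2_1(n))=\neg\,\varphi(E^2_2(n))\,\}$ is infinite, using ``ex falso quodlibet'' and the surjectivity of $E^2_1,E^2_2$. What I actually need is slightly more than $\VC(\mathcal{G}_F)=\infty$ — which on its own would \emph{not} yield an infinite Littlestone tree — namely that all the relevant shattered sets live inside the \emph{single} fixed infinite set $S$: for every finite $T\subseteq S$ and every $b\colon T\to\{0,1\}$, the sequence $a$ with $a|_T=b$ and $a\equiv 0$ off $T$ lies in $c_c(\{0,1\})$, and then $g_a|_T=b$ because $g_a$ agrees with $a$ on all of $S$.

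Next I would build the infinite Littlestone tree explicitly. Index the nodes of the complete infinite binary tree by finite binary strings $\mathbf v$ (the empty string being the root), fix any injection $\iota$ from the set of finite binary strings into $S$, and set $x_{\mathbf v}:=\iota(\mathbf v)$. Given any branch $(y_1,y_2,\dots)$ and any $n\geq 0$, the nodes $\emptyset,\,y_1,\,y_1y_2,\,\dots,\,y_1\cdots y_n$ are pairwise distinct strings, so $x_\emptyset,x_{y_1},\dots,x_{y_1\cdots y_n}$ are pairwise distinct elements of $S$; by the previous paragraph there is a $g\in\mathcal{G}_F$ with $g(x_{y_1\cdots y_k})=y_{k+1}$ for all $0\leq k\leq n$. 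That is precisely the defining condition for a Littlestone tree of depth $\infty$, so $\mathcal{G}_F$ has an infinite Littlestone tree (and, via Theorem~\ref{ThmUniversalOnlineLearningNoInfiniteLittlestoneTrees}, is not universally online learnable).

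The only genuine obstacle is ensuring that the points appearing along any one root-to-leaf path are pairwise distinct, so that the prescribed branch labels never conflict; this is exactly why the node set is mapped \emph{injectively} into $S$ rather than, say, reusing a single element of $S$ at each level — the latter is the trap behind ``$\operatorname{Ldim}=\infty$ without an infinite Littlestone tree.'' The compact-support restriction defining $c_c(\{0,1\})$ is never an issue, since along a finite initial segment of a branch only finitely many nonzero values are ever prescribed.
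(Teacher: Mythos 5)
Your proof is correct and follows essentially the same strategy as the paper: for consistent $F$ note $\mathcal{G}_F=\{0\}$ cannot carry any Littlestone tree, and for inconsistent $F$ pull the infinite shattered family of indices from the proof of Theorem~\ref{ThmInconsistencyInfiniteVCDim} and plant it on the infinite binary tree. The only cosmetic difference is that the paper labels all nodes of the $k$-th layer by the single index $n_{k+1}$, which already gives the needed distinctness along each root-to-leaf path, whereas you inject the entire node set into the active index set $S$ — a slightly stronger choice than necessary but perfectly sound, and your explicit remark about why distinctness along paths is the crux is exactly the right thing to be careful about.
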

\begin{proof}
    This follows from Proposition \ref{PrpConsistencyCollapseFctClass} and Theorem \ref{ThmInconsistencyHighComplexity}. See Appendix \ref{AppendixProofs} for details.
\end{proof}

With this observation, the same reasoning, using Gödel's second incompleteness theorem, as in Subsection \ref{SbSctUndecidablePACClassification} yields:

\begin{corollary}\label{CrlGoedelUndecidabilityNoInfiniteLittlestoneTree}
    Assume that $F$ is a recursively enumerable and consistent formal system that contains elementary arithmetic and that can reason about $\mathcal{G}_F$, and such that infinitely many theorems can be proved in $F$. Then $\mathcal{G}_F$ does not have an infinite Littlestone tree, but this statement cannot be proved in $F$.
\end{corollary}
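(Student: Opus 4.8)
The plan is to mimic the argument used for Corollary \ref{CrlGoedelUndecidabilityFinitessVCDim}, replacing the VC-dimension criterion by the infinite-Littlestone-tree criterion. First I would invoke Proposition \ref{PrpEquivalenceConsistencyNoInfiniteLittlestoneTree}, which tells us that $F$ is consistent if and only if $\mathcal{G}_F$ has no infinite Littlestone tree. Since $F$ is assumed consistent, this immediately gives the first half of the statement: $\mathcal{G}_F$ does not have an infinite Littlestone tree. It remains to show that the statement ``$\mathcal{G}_F$ has no infinite Littlestone tree'' is not provable in $F$.

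For the unprovability part, I would argue by contradiction exactly as in the proof of Corollary \ref{CrlGoedelUndecidabilityFinitessVCDim}. Suppose $F$ proves ``$\mathcal{G}_F$ has no infinite Littlestone tree.'' The proof of Proposition \ref{PrpEquivalenceConsistencyNoInfiniteLittlestoneTree} — more precisely, the direction showing that absence of an infinite Littlestone tree implies consistency of $F$, which is the contrapositive of the fact that inconsistency of $F$ forces $\mathcal{G}_F$ to contain all of $c_c(\{0,1\})$ and hence an infinite Littlestone tree — is an elementary arithmetical argument. Since $F$ is assumed to contain elementary arithmetic and to be capable of formalizing the construction of $\mathcal{G}_F$ and the notion of an infinite Littlestone tree, this implication can be carried out inside $F$. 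Composing, $F$ would then prove its own consistency, contradicting Gödel's second incompleteness theorem (applicable because $F$ is recursively enumerable, consistent, and contains elementary arithmetic).

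The main obstacle, as in the earlier corollaries, is not a mathematical difficulty but the formalizability hypothesis: one must be confident that $F$ can express the function class $\mathcal{G}_F$ (equivalently, reason about the primitive recursive enumeration $\varphi$ and the pairing function $E^2$) and can state and manipulate the combinatorial predicate ``there exists an infinite Littlestone tree of $\mathcal{G}_F$.'' Given that the construction of $\mathcal{G}_F$ is primitive recursive (Theorem \ref{ThmGoedelBinaryFctClassComputable}) and that the inconsistency-to-infinite-tree implication only requires exhibiting, for inconsistent $F$, arbitrarily deep complete binary trees of points that are shattered in the branching sense — which is the same elementary construction as in the proof of Theorem \ref{ThmInconsistencyInfiniteVCDim} — this formalization goes through for any $F$ containing elementary arithmetic, and the corollary follows.
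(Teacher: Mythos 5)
Your proof is correct and follows exactly the same route the paper takes: apply Proposition~\ref{PrpEquivalenceConsistencyNoInfiniteLittlestoneTree} to get the positive claim, then argue by contradiction via Gödel's second incompleteness theorem, noting that formalizing the inconsistency-implies-infinite-tree direction inside $F$ is the only hypothesis beyond what is already assumed. The paper gives no separate proof for this corollary and simply points to the reasoning of Corollary~\ref{CrlGoedelUndecidabilityFinitessVCDim}, which is precisely what you reproduce.
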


Again, we can also focus our attention on $\mathcal{H}_M$. Here, we obtain:

\begin{proposition}\label{PrpHaltingNoInfiniteLittlestoneTree}
    Let $M$ be a Turing machine. The binary-valued function class $\mathcal{H}_{M}$ has an infinite Littlestone tree iff $M$ does not halt on the empty input.
\end{proposition}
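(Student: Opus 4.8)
The plan is to exploit the same dichotomy that drove Lemma~\ref{LmmHaltingFiniteVCDim} and Proposition~\ref{PrpHaltingFiniteLDim}: if $M$ halts on the empty input after exactly $K$ steps, then $h_a(n)=0$ for every $n\geq K$ and every $a$, so $\mathcal{H}_M=\tilde{\mathcal{H}}_M$ (Remark~\ref{RmkAlternativeTuringClass}) is a finite class with at most $2^K$ elements; whereas if $M$ never halts, then $h_a=a$ for every $a\in c_c(\{0,1\})$, i.e.\ $\mathcal{H}_M=c_c(\{0,1\})$. The two implications of the biconditional are then handled separately.

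For the direction ``$M$ does not halt $\Rightarrow$ $\mathcal{H}_M$ has an infinite Littlestone tree'' I would construct the tree explicitly. Label the node at depth $k$ (for every $\mathbf{v}\in\{0,1\}^k$) by the point $x_{\mathbf{v}}:=k\in\IN$. Given any branch $y_1,y_2,\ldots$ and any cutoff $n$, set $a\in c_c(\{0,1\})$ by $a(k):=y_{k+1}$ for $0\leq k\leq n$ and $a(k):=0$ for $k>n$; this sequence has finite support, and since $M$ does not halt within $\leq m$ steps for any $m$ we have $h_a=a$, whence $h_a(x_{y_1\ldots y_k})=h_a(k)=y_{k+1}$ for all $0\leq k\leq n$. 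This is exactly the condition in Definition~\ref{DffLittlestoneTree} for a Littlestone tree of depth $\infty$. Should one insist that the node labels be pairwise distinct, it suffices to replace $x_{\mathbf{v}}$ by the value at $\mathbf{v}$ of a fixed injection of the set of finite binary strings into $\IN$ and to place the labels $y_{k+1}$ of $a$ at those positions instead; the verification is identical.

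For the converse direction I would argue by contradiction using Proposition~\ref{PrpHaltingFiniteLDim}: if $M$ halts after exactly $K$ steps, then $\operatorname{Ldim}(\mathcal{H}_M)=K<\infty$, but an infinite Littlestone tree of $\mathcal{H}_M$ restricts, for every finite $d$, to a Littlestone tree of depth $d$, forcing $\operatorname{Ldim}(\mathcal{H}_M)=\infty$ --- a contradiction. Alternatively, without invoking Proposition~\ref{PrpHaltingFiniteLDim}, one can observe that distinct length-$d$ branches of a depth-$d$ Littlestone tree must be realised by distinct functions (comparing the two realising functions at the deepest node common to the two root-to-leaf paths yields two different labels at the same point), so such a tree needs at least $2^d$ functions, which is impossible in the finite class $\mathcal{H}_M=\tilde{\mathcal{H}}_M$.

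I expect no serious obstacle here. The only direction with genuine content is ``$M$ does not halt $\Rightarrow$ infinite Littlestone tree'': one cannot deduce it merely from $\operatorname{Ldim}(\mathcal{H}_M)=\infty$, as the excerpt warns, so the explicit tree construction above is the essential point, and it is immediate precisely because $c_c(\{0,1\})$ is so rich. The only real care needed is bookkeeping to match the index conventions of Definition~\ref{DffLittlestoneTree}.
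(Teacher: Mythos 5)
Your argument is essentially the paper's own: for the ``$M$ does not halt'' direction you build the infinite Littlestone tree by labelling every node at depth $k$ with the point $k$, exactly as the paper does, and for the ``$M$ halts'' direction you invoke Proposition~\ref{PrpHaltingFiniteLDim} and the elementary fact that an infinite Littlestone tree would force $\operatorname{Ldim}(\mathcal{H}_M)=\infty$, which is also the paper's argument. The extra caveat about making node labels pairwise distinct is unnecessary (the paper's own tree reuses the label $k$ across an entire layer, and Definition~\ref{DffLittlestoneTree} does not require distinctness), and the alternative counting argument you sketch for the converse is a fine but unneeded variant.
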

\begin{proof}
    This follows from Proposition \ref{PrpConsistencyCollapseFctClass} and Theorem \ref{ThmInconsistencyHighComplexity}. See Appendix \ref{AppendixProofs} for details.
\end{proof}

\begin{corollary}\label{CrlTuringUndecidabilityInfiniteLittlestoneTree}
There is no Turing machine that, upon input of the code of an arbitrary computable binary-valued function class coming from $\{\mathcal{H}_M\}_{M}$, decides whether that class has an infinite Littlestone tree. In other words, the existence of infinite Littlestone trees is Turing undecidable.
\end{corollary}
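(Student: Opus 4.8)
The plan is to mimic the reduction to the halting problem used in the proof of Corollary~\ref{CrlTuringUndecidabilityFinitenessVCDim} (and again in Corollary~\ref{CrlTuringUndecidabilityFinitenessTDim}), now with the property ``has an infinite Littlestone tree'' in place of ``has finite VC-dimension.'' First I would assume, for contradiction, that there exists a Turing machine $M_{\mathrm{LT}}$ which, upon input of the code of an arbitrary computable binary-valued function class, outputs $1$ if the class has an infinite Littlestone tree and $0$ otherwise.

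Then I would build from $M_{\mathrm{LT}}$ a machine that decides the halting problem on the empty input. Given the code of a Turing machine $M$, this machine first computes the code of the function $H_M$ representing $\mathcal{H}_M$. This step is legitimate: $\mathcal{H}_M$ is a computable function class by Theorem~\ref{ThmTuringFctClassComputable}, and the assignment $M\mapsto\mathcal{H}_M$ is computable by Corollary~\ref{CrlTuringMappingComputable} (equivalently, composition of Turing machines is a primitive recursive operation on codes). Moreover the code produced is a valid input for $M_{\mathrm{LT}}$, since it genuinely describes a computable $\{0,1\}$-valued function class on $\IN$ — precisely the kind of input $M_{\mathrm{LT}}$ is promised to handle. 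Feeding this code to $M_{\mathrm{LT}}$ and invoking Proposition~\ref{PrpHaltingNoInfiniteLittlestoneTree}, which says that $\mathcal{H}_M$ has an infinite Littlestone tree if and only if $M$ does not halt on the empty input, we then output ``halts'' when $M_{\mathrm{LT}}$ returns $0$ and ``does not halt'' when it returns $1$. This decides the halting problem, a contradiction; hence $M_{\mathrm{LT}}$ does not exist. By Theorem~\ref{ThmUniversalOnlineLearningNoInfiniteLittlestoneTrees}, the same conclusion transfers to deciding universal online learnability.

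I do not expect a genuine obstacle here: all the substance is already carried by Proposition~\ref{PrpHaltingNoInfiniteLittlestoneTree} (relating the combinatorial condition to halting) and by Corollary~\ref{CrlTuringMappingComputable} (computability of the construction $M\mapsto\mathcal{H}_M$), both of which I may assume. The one point deserving explicit care is the subtlety flagged in Remark~\ref{RmkRice}: the hypothetical decider need only behave correctly on codes that actually encode computable function classes, so one must verify that the code of $H_M$ handed to $M_{\mathrm{LT}}$ is of this form — which it is, by Theorem~\ref{ThmTuringFctClassComputable}. This is exactly why the result is not a trivial consequence of Rice's theorem, and why reusing the $\mathcal{H}_M$ construction rather than a naive case distinction on whether $M$ halts is the right move.
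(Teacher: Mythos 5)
Your proof is correct and mirrors the paper's argument exactly: reduce the halting problem to the hypothetical infinite-Littlestone-tree decider via the computable map $M\mapsto\mathcal{H}_M$ (Corollary~\ref{CrlTuringMappingComputable}), using Proposition~\ref{PrpHaltingNoInfiniteLittlestoneTree} to translate between halting and the combinatorial condition, precisely as the paper does for the VC-dimension in Corollary~\ref{CrlTuringUndecidabilityFinitenessVCDim}. The care you take regarding Remark~\ref{RmkRice} — that the code handed to the decider genuinely encodes a total computable function class — is the same subtlety the paper flags and handles identically.
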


As before, because of the computability of $M\mapsto\mathcal{H}_M$, through Theorem \ref{ThmUniversalOnlineLearningNoInfiniteLittlestoneTrees}, this implies that universal online learnability is both Gödel and Turing undecidable.

\section{Conclusion and Open Questions}\label{SctConclusion}

In this work, we have shown that in the standard model of binary classification, in two models of online learning, and in a basic model describing teacher-learner interactions, it is in general undecidable whether the learning task can be completed, given only a computable description of the hypothesis class for the task. We have established this for two different meanings of ``undecidable,'' the first being ``true, but not provable in a formal system'' and the second being ``not computable.'' In both cases, our results follow by providing computable constructions that allow for a reduction of the problem of deciding finiteness of the complexity measure for the respective learning task to the prototypic undecidable problem, i.e., to proving consistency of a formal system or to deciding whether a Turing machine halts.

It was already known, due to \cite{BenDavid.2019}, that learnability can be independent of the axioms of ZFC. We have proved a similar-in-spirit result for the arguably most influential learning model, the PAC model of binary classification. By discussing our proof strategy also for a teacher-learner model and for online learning, we have demonstrated that it is not specific to the PAC setting. Moreover, learnability can be undecidable also in other formal systems and in the terminology of computer science. A crucial feature of our constructions, especially for establishing undecidability in the latter sense, is that we are only dealing with computable objects. This is to be contrasted with \cite{BenDavid.2019}, where the continuum is used. In particular, the arguments of \cite{BenDavid.2019} do not give uncomputability results. Note that, since our constructions are computable, instead of PAC learnability, equivalently we could have considered computable PAC learnability, because of \cite[Theorem $10$]{Agarwal.2020}, when restricting our attention to the realizable scenario. 

We emphasize that our undecidability results crucially depend on the representation of the hypothesis classes. That is, our results rely on a hypothetical decision algorithm having access to a computable hypothesis class via its code. We have shown that there is no general way of deciding, given the code of a hypothesis class from a uniformly computable family, whether that class is learnable. As demonstrated in our proofs, this happens because it is impossible to determine from such a representation whether the hypothesis class is simple or highly complex. Thus, when evaluating a learning task and the associated function class, the representation plays an important role. Formal and algorithmic representations, such as the one via a function of parameters and inputs put forward in this work, can be insufficient for determining the complexity and thus the learnability of a hypothesis class of interest. 

We hope that our work adds to the ongoing research aiming towards a better understanding of the theoretical prospects and limits of machine learning. Our results indicate that potential problems for applications of machine learning do not only arise on the level of algorithmic design, which in itself is an extremely challenging task. Rather, already when faced with a learning task, we encounter a fundamental difficulty: It is in general not possible to decide whether that task is, from the of sample complexity, leaving questions of computational complexity aside, learnable, i.e., in principle amenable to a solution via machine learning.

From a more practical perspective, our results can be interpreted as follows: When faced with a learning task, one can usually choose which hypothesis class to use. This choice will be guided by different considerations, such as prior knowledge about the problem, potential issues for optimization, and questions of learnability. In particular, one usually chooses a class that is known to be learnable. Thereby, the ``library'' of candidate function classes is restricted to those whose learnability has already been established. Our results say that there is no generic way of enlarging this library: Every time one faces a learning problem for which all classes from the current library perform poorly, identifying a new suitable candidate class, even leaving questions of optimization aside, presents a new challenge because of learnability alone. One possibility of facing this challenge in practice may be to use additional information beyond the description of the candidate class. For example, for PAC binary classification, \cite{Vapnik.1994} proposed a general method for measuring an effective version of the VC-dimension based on experimental estimates for the expected worst-case generalization error on training data of different sizes. 

Finally, we mention some questions raised by our work:
\begin{itemize}
\item We have approached learnability through criteria based on complexity measures of the function class under consideration. Can (un-)decidability be established for learnability via algorithmic properties, e.g., stability \cite{bousquet2002stability} or compression-based schemes \cite{littlestone1986relating}?
\item For our PAC learning scenario, we require a sample complexity bound that is uniform over the function class. Can our results be extended to non-uniform learning models in which the sample size is allowed to depend on the function to be learned?
\item As discussed in Remark \ref{RmkFinitenessVCDimVSFinitenessClass}, it would be interesting to see whether the finiteness of the VC-, teaching, and Littlestone dimension remain undecidable also when restricting the potential inputs to codes of infinite function classes.
\item As observed in \cite{Bousquet.2020}, universal online learning is closely connected to Gale-Stewart games. Do undecidability results in one of these two scenarios translate to the respectively other one? For example, can we recover the undecidability results for Gale-Stewart games due to \cite{Rabin.1958} and \cite{Jones.1982} from our results on universal online learning? Or can we use these works to gain further insight into undecidability in online learning?
\end{itemize}

\newpage
\section*{Acknowledgements}
I thank Michael M.~Wolf for stimulating discussions on questions of undecidability and for suggesting the reasoning leading to Corollary \ref{CrlGoedelUndecidabilityFiniteTeachingSet}. 
I also thank the anonymous reviewers and the meta-reviewer from COLT 2021, the anonymous reviewers from ITCS 2022, the anonymous reviewers from COLT 2022, the anonymous reviewers from JMLR, and the anonymous reviewers at the International Journal for Approximate Reasoning for their feedback.
Furthermore, I thank Artem Chernikov, Asaf Karagila, Aryeh Kontorovich, Vladimir Pestov, and Roi Weiss for pointing the measure-theoretic subtleties around Theorem \ref{ThmFundamentalTheoremBinaryClassification} out to me. 
Moreover, I thank Aryeh Kontorovich for bringing the references \cite{papadimitriou1996limited, mossel2002complexity, hanneke2021universal} to my attention.
Finally, I thank Tom Sterkenburg for an illuminating discussion leading to a clarified statement of Remark \ref{RmkRice}.

Support from the TopMath Graduate Center of TUM the Graduate School at the Technische Universität München, Germany, from the TopMath Program at the Elite Network of Bavaria, from the German Academic Scholarship Foundation (Studienstiftung des deutschen Volkes), from the BMWK (PlanQK), and from the DAAD PRIME Fellowship program is gratefully acknowledged.

\setcounter{secnumdepth}{0}
\defbibheading{head}{\section{References}}
%\nocite{*}
\printbibliography[heading=head]

\newpage
\appendix
\section*{Appendix}
%\appendix\normalsize
\setcounter{secnumdepth}{2}

\section{Proofs}\label{AppendixProofs}

\begin{proof}[Proof of Proposition \ref{PrpExFalsoQuodlibet}]
As $F$ is inconsistent, there exists a theorem $p$ such that both $p$ and $\neg p$ can be proved in $F$. As $p$ can be proved in $F$, we have $q\to p$ (which is our notation for ``$q$ implies $p$''). By negation we then have $\neg p \to \neg q$. As $\neg p$ can be proved in $F$, also $\neg q$ can be proved in $F$. If we now exchange $q$ by $\neg q$ in the above reasoning, we see that also $q$ can be proved in $F$.
\end{proof}

\begin{proof}[Proof of Lemma \ref{LmmRecursiveEnumerabilityCompactlySupportedBinarySequences}]
We define $C:\IN\times\IN\to\{ 0,1\}$ via
\begin{align*}
C(m,n) = \begin{cases} n^{\textrm{th}}\textrm{ bit in the binary representation of } m \quad &\textrm{ if } m>0\textrm{ and } 2^n\leq m\\ 0 &\textrm{ else } \end{cases}.
\end{align*}
As exponentiation and finding the binary representation of a natural number can be done in a primitive recursive manner, the function $C$ is defined in terms of primitive recursive functions and a case distinction via a primitive recursive predicate Thus, $C$ is itself primitive recursive.

Clearly, the function $n\mapsto C(m,n)$ has finite support and is thus an element of $c_c(\{ 0,1\})$. Conversely, if $a\in c_c(\{0,1\})$, then there exists $K\in\IN$ such that $a(n)=0$ for all $n>K$. Hence, if we take $m_a$ to be the natural number with binary representation $a(0)\ldots a(K)$, then $a(n)=C(m_a,n)$ for all $n\in\IN$.
\end{proof}

\begin{proof}[Proof of Proposition \ref{PrpEquivalenceConsistencyFinitenessTDim}]
If $F$ is consistent, $\mathcal{G}_F=\{ 0\}$ by Proposition \ref{PrpConsistencyCollapseFctClass} and the claim is trivial. If $F$ is inconsistent, then uniquely identifying a function $g_a\in\mathcal{G}_F$ requires one to uniquely identify the subsequence $(a_{k_l})_{l\in\IN}$ of $a\in c_c(\{0,1\})$ chosen such that $k_{l+1}>k_l$ and such that $\varphi(E^2_1(k))=\neg\varphi(E^2_2(k))$ iff $k=k_l$ for some $l\in\IN$. As $\varphi(E^2_1(k))=\neg\varphi(E^2_2(k))$ is satisfied for infinitely many $k\in\IN$ (see Proposition \ref{PrpExFalsoQuodlibet}) and the size of the support of an element of $c_c(\{0,1\})$ can be arbitrarily large, any training data set that uniquely identifies $g_a$ has to consist of infinitely many labelled examples.
\end{proof}

\begin{proof}[Proof of Proposition \ref{PrpEquivalenceConsistencyFinitenessLDim}]
    If $F$ is consistent, $\mathcal{G}_F=\{0\}$ and clearly $\operatorname{Ldim}(\mathcal{G}_F)=0$. If $F$ is inconsistent, we can use the well known inequality $\VC\leq \operatorname{Ldim}$ together with Corollary \ref{CrlEquivalenceConsistencyFinitenessVCDim} to obtain $\operatorname{Ldim}(\mathcal{G}_F)=\infty$.
\end{proof}

\begin{proof}[Proof of Proposition \ref{PrpHaltingFiniteLDim}]
    This follows quite directly the well known fact that, for any function class $\mathcal{G}\subset\{0,1\}^\mathcal{X}$, $\VC(\mathcal{G})\leq \operatorname{Ldim}(\mathcal{G})\leq \log_2 \lvert\mathcal{G}\rvert$.

    Namely, if $M$ halts on the empty input, these two inequalities, due to Corollary \ref{CrlHaltingFiniteVCDim} and Proposition \ref{PrpTuringClassAlternativeForm}, become $K\leq \operatorname{Ldim}(\mathcal{G})\leq \log_2 \lvert\mathcal{H}\rvert= K$. And if $M$ does not halt on the empty input, the lower bound via the VC-dimension, together with Proposition \ref{PrpTuringClassAlternativeForm}, implies $\operatorname{Ldim}(\mathcal{G})=\infty$. 
\end{proof}

\begin{proof}[Proof of Proposition \ref{PrpEquivalenceConsistencyNoInfiniteLittlestoneTree}]
If $F$ is consistent, then $\operatorname{Ldim}(\mathcal{G}_F)<\infty$ by Proposition \ref{PrpEquivalenceConsistencyFinitenessLDim}. In particular, $\mathcal{G}_F$ does not have an infinite Littlestone tree.

If $F$ is inconsistent, then, by Theorem \ref{ThmInconsistencyHighComplexity}, there exists a sequence $(n_k)_{k=1}^{\infty}\subset\mathbb{N}$ such that $\{n_1,\ldots,n_N\}$ is shattered by $\mathcal{G}_F$ for every $N\in\mathbb{N}_{\geq 1}$. Therefore, we obtain an infinite Littlestone tree of $\mathcal{H}_M$ by labelling every node in the $k^{\textrm{th}}$ layer by $n_{k+1}$, for $k\in\mathbb{N}_0$. 
\end{proof}

\begin{proof}[Proof of Proposition \ref{PrpHaltingNoInfiniteLittlestoneTree}]
    If $M$ halts on the empty input, then $\operatorname{Ldim}(\mathcal{H}_M)<\infty$ by Proposition \ref{PrpHaltingFiniteLDim}. In particular, $\mathcal{H}_M$ does not have an infinite Littlestone tree.

    If $M$ does not halt on the empty input, then $\{0,\ldots,N\}$ is shattered by $\mathcal{H}_M$ for every $N\in\mathbb{N}$, as we have seen in the proof of Corollary \ref{CrlHaltingFiniteVCDim}. Therefore, we obtain an infinite Littlestone tree of $\mathcal{H}_M$ by labelling every node in the $k^{\textrm{th}}$ layer by $k$, for $k\in\mathbb{N}_0$. 
\end{proof}

\section{Gödel and Incompleteness of Formal Systems}\label{SctGoedelPreliminaries}

Here, we compile standard notions connected to formal systems which appear in the main body of the paper. However, some notions will only be introduced informally and the interested reader is referred to other sources for the formal definitions.

We denote by $\IN$ the natural numbers including $0$. We call a function $f:\IN^k\to\IN$ \emph{primitive recursive} if it can be built from the zero function, the successor function, and the coordinate projection functions via composition and primitive recursion. From a modern perspective, the primitive recursive functions are those that can be implemented using basic arithmetic as well as IF THEN ELSE, AND, OR, NOT, $=$, $>$, and FOR loops. WHILE loops are not allowed here.

Next, we recall, albeit only informally, the notion of a formal system.
\begin{definition}[Formal systems - Informal]\label{DffFormalSystems}
A \emph{formal system} $F$ consists of a finite alphabet of symbols, a language of statements that can be well-formed from the alphabet, a distinguished set of statements called \emph{axioms}, and rules for how to derive/prove new theorems from these axioms.\\
A formal system $F$ is called \emph{consistent} if there is no well-formed statement such that both it and its negation can be proved in $F$. Otherwise, we call $F$ \emph{inconsistent}.
%A formal system $F$ is called \emph{complete} if for any statement that can be formulated $F$, either the statement itself or its negation can be proved in $F$. Otherwise, we call $F$ \emph{incomplete}.\\
\end{definition}

We will be interested in a particular kind of formal systems in which the provable theorems, i.e., the statements that can be deduced from the axioms according to the derivation rules, can be recursively enumerated. To make this assumption more rigorous, we first recall 

\begin{definition}[Gödel numbering - Informal]\label{DffGödelNumbering}
A \emph{Gödel numbering} for a formal system $F$ is an injective function that maps each symbol in the alphabet and each well-formed statement to an element of $\IN$.
\end{definition}

For our purposes, it does not matter which Gödel numbering is used. We only use that Gödel numberings exist for which ``translating'' between a string of Gödel numbers of symbols describing a statement and the actual Gödel number of that statement can be done primitive recursively in both directions. Gödel's original construction has this property. From now on, we fix such a Gödel numbering. This allows us to identify statements in a formal system with elements of $\IN$ and ``manipulations'' of statements with primitive recursive maps between natural numbers. Both of these identifications will sometimes be implicit throughout the paper.

From this perspective, we can describe the type of formal systems used in this work.

\begin{definition}[Recursively enumerable formal systems]\label{DffRecursivelyEnumerableFormalSystems}
A formal system $F$ is called \emph{recursively enumerable} (or \emph{effectively axiomatized}) if there exists a primitive recursive function $\varphi:\IN\to\IN$ such that $\{\varphi(n)~|~n\in\IN\}$ is exactly the set of all Gödel numbers in a fixed Gödel numbering of statements that can be proved in $F$.
\end{definition}

Given such a primitive recursive enumeration $\varphi$ of provable theorems, we will sometimes abuse notation and take $\varphi(n)$ to denote both a theorem and its Gödel number. The exact meaning, if not made explicit, will be clear from the context.

Gödel's second incompleteness theorem provides, for any recursively enumerable and consistent formal system that contains elementary arithmetic, an explicit statement that is true but cannot be proved in that formal system. 

\begin{theorem}[Gödel's second incompleteness theorem \cite{Godel.1931}]\label{ThmGoedelIncompleteness}
Assume that $F$ is a recursively enumerable and consistent formal system that contains elementary arithmetic. Then the consistency of $F$ is not provable in $F$.
\end{theorem}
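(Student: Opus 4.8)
The plan is to follow the classical route via the arithmetized provability predicate and the Hilbert--Bernays--L\"ob derivability conditions. Since $F$ is recursively enumerable, the relation ``$p$ codes an $F$-proof of the statement with G\"odel number $x$'' is primitive recursive, hence representable in $F$ (this is where ``contains elementary arithmetic'' enters); let $\mathrm{Prov}_F(x)$ be the resulting $\Sigma_1$ provability predicate obtained by existentially quantifying over the proof code, and set $\mathrm{Con}_F := \neg\,\mathrm{Prov}_F(\ulcorner 0=1\urcorner)$. First I would apply the diagonal (fixed-point) lemma, available because $F$ contains elementary arithmetic and can represent substitution, to produce a sentence $G$ with $F \vdash G \leftrightarrow \neg\,\mathrm{Prov}_F(\ulcorner G\urcorner)$, the usual self-referential sentence asserting its own unprovability.

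The next step is the first incompleteness argument for $G$: if $F \vdash G$, then by derivability condition D1 ($F\vdash\phi$ implies $F\vdash\mathrm{Prov}_F(\ulcorner\phi\urcorner)$) we get $F\vdash\mathrm{Prov}_F(\ulcorner G\urcorner)$, while the fixed-point equivalence gives $F\vdash\neg\,\mathrm{Prov}_F(\ulcorner G\urcorner)$, contradicting consistency of $F$. Hence, using only consistency, $F\nvdash G$.

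The heart of the matter is to carry out this last argument \emph{inside} $F$, i.e.\ to show $F \vdash \mathrm{Con}_F \rightarrow G$. Reasoning within $F$, assume $\mathrm{Prov}_F(\ulcorner G\urcorner)$. By provable $\Sigma_1$-completeness (condition D3) one has $F\vdash \mathrm{Prov}_F(\ulcorner G\urcorner)\rightarrow\mathrm{Prov}_F(\ulcorner \mathrm{Prov}_F(\ulcorner G\urcorner)\urcorner)$; applying D1 to the fixed-point equivalence $G\rightarrow\neg\,\mathrm{Prov}_F(\ulcorner G\urcorner)$ and then D2 (distribution of $\mathrm{Prov}_F$ over modus ponens) yields $F\vdash\mathrm{Prov}_F(\ulcorner G\urcorner)\rightarrow\mathrm{Prov}_F(\ulcorner\neg\,\mathrm{Prov}_F(\ulcorner G\urcorner)\urcorner)$. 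Since $0=1$ is provable from $\phi$ together with $\neg\phi$, two more uses of D1 and D2 give $F\vdash\mathrm{Prov}_F(\ulcorner G\urcorner)\rightarrow\mathrm{Prov}_F(\ulcorner 0=1\urcorner)$, that is, $\mathrm{Prov}_F(\ulcorner G\urcorner)\rightarrow\neg\,\mathrm{Con}_F$. Contraposing, $F\vdash\mathrm{Con}_F\rightarrow\neg\,\mathrm{Prov}_F(\ulcorner G\urcorner)$, and the fixed-point equivalence then turns this into $F\vdash\mathrm{Con}_F\rightarrow G$. Combined with $F\nvdash G$ from the previous paragraph, we conclude $F\nvdash\mathrm{Con}_F$, which is the assertion of the theorem.

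I expect the main obstacle to be verifying the derivability conditions, above all D3 (the formalized $\Sigma_1$-completeness of $F$): this is the technically heaviest ingredient and is precisely where the hypothesis that $F$ contains elementary arithmetic is essential. Conditions D1 and D2 reduce to routine, if laborious, arithmetization of the proof-checking relation, and the diagonal lemma is a short self-reference argument once substitution is representable. For a self-contained development one should cite a standard textbook treatment of the Hilbert--Bernays--L\"ob conditions rather than reprove these arithmetization facts here.
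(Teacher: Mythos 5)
The paper does not prove Theorem~\ref{ThmGoedelIncompleteness} at all: it appears in Appendix~\ref{SctGoedelPreliminaries} purely as recalled background, with a citation to G\"odel's 1931 paper, and is then invoked as a black box in Corollaries~\ref{CrlGoedelUndecidabilityFinitessVCDim}, \ref{CrlGoedelUndecidabilityFiniteTeachingSet}, \ref{CrlGoedelUndecidabilityFinitessTDim}, \ref{CrlGoedelUndecidabilityFinitessLDim}, and \ref{CrlGoedelUndecidabilityNoInfiniteLittlestoneTree}. So there is no in-paper argument to compare your sketch against. That said, your outline is the standard and correct route: arithmetize the proof relation (using recursive enumerability of $F$) into a $\Sigma_1$ provability predicate $\mathrm{Prov}_F$, use the diagonal lemma to produce the G\"odel sentence $G$ with $F\vdash G\leftrightarrow\neg\mathrm{Prov}_F(\ulcorner G\urcorner)$, derive $F\nvdash G$ from consistency together with derivability condition D1, and then internalize that argument via the Hilbert--Bernays--L\"ob conditions D1--D3 to obtain $F\vdash\mathrm{Con}_F\rightarrow G$, whence $F\nvdash\mathrm{Con}_F$. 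Your remark that D3 (formalized $\Sigma_1$-completeness) is the technically heavy step, and that the hypothesis that $F$ contains elementary arithmetic is what makes representability and D1--D3 available, is exactly right; at the level of rigor of this paper one would, as you suggest, cite a standard textbook treatment of the derivability conditions rather than reprove them.
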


We call a statement that is true but not provable in a formal system $F$ \emph{Gödel undecidable} in $F$. This is not standard terminology, we merely use it to shorten some formulations.

For a more formal presentation of these and other notions from mathematical logic, the reader is referred to textbooks such as \cite{Barwise.1993, Kleene.2002, Enderton.2013}.

\section{Turing and Uncomputability}\label{SctTuringPreliminaries}

This section recalls standard definitions and results related to Turing machines and computability. Again, sometimes we give only an informal presentation and refer to textbooks for details.

In \cite{Turing.1937}, Turing introduced what are now known as a Turing machines. We do not give a formal definition, but instead describe the workings of a Turing machine informally. For a more rigorous presentation, see, e.g., \cite{Davis.1982, Soare.2016}.

\begin{definition}[Turing machines - Informal]\label{DffTMs}
A Turing machine $M$ consists of
\begin{itemize}
\item a $1$-dimensional tape with infinitely many cells extending in both directions, each of which contains a symbol from a finite alphabet $\Sigma$, 
\item a head that can read and write symbols in a single cell and move to the left or to the right by one cell, 
\item a finite set of states $Q$ containing an \emph{initial state} and a \emph{halting state}, 
\item and an instruction function $I:\Sigma\times Q\to\Sigma\times\{L,R\}\times Q$ describing the write-, move- and state-update-behaviour of $M$ upon reading a given symbol while in a given state.
\end{itemize}
\end{definition}
The two distinguished states are the initial state, in which the Turing machine begins any of its computations, and the halting state, that causes the Turing machine to halt when it is reached.

According to the Church-Turing thesis, which could be considered a ``law of nature'' for the world of computing, everything that can be reasonably considered computable is computable by a Turing machine. Hence, we take Turing machines as our model for defining computability.

\begin{definition}[(Turing) Computable functions]\label{DffTuringComputableFcts}
A partial function $f:\IN^k\to\IN$, for $k\in\IN_{\geq 1}$, is \emph{(Turing) computable} if there exists a finite-state Turing machine $M$ such that, whenever we run $M$ on a tape with an encoding of $x\in\operatorname{dom}(f)$ written on it, $M$ eventually halts with the tape containing an encoding of $f(x)$, and whenever we run $M$ on a tape with an encoding of $x\not\in\operatorname{dom}(f)$ written on it, $M$ does not halt.
\end{definition}

One possible choice of encoding is the unary encoding. That is, $x\in\IN$ is represented by $x+1$ consecutive ones on the tape. The remaining tape is left blank. An element of $\IN^k$ can be represented by $k$ blocks of unary encodings of the components, separated by single zeros. 

It is useful to note at this point that any primitive recursive function is computable. However, there are computable functions that are not primitive recursive.

We call a decision problem whose corresponding function, mapping instances of the problem to a binary ``yes-or-no'' output, is not computable \emph{Turing undecidable}. The prototypic example of a Turing undecidable decision problem is the \emph{halting problem}, i.e., the problem of deciding whether a given Turing machine halts on the empty input. Already \cite{Turing.1937} observed that this cannot be achieved in a computable way.

We will also use a notion of computability of function classes. 

\begin{definition}\label{DffComputableFunctionClass}
We say that a class $\mathcal{G}\subseteq\IN^\IN$ is computable if there exists a total computable function $G:\IN\times\IN\to\IN$ such that $\mathcal{G}=\{n\mapsto G(m,n)~|~m\in\IN\}$.
\end{definition}

We recall one last fact related to Turing machines. Namely, there exist \emph{universal Turing machines} capable of simulating any Turing machine \cite{Turing.1937}. From now on, for each $k\in\mathbb{N}_{\geq 1}$, we fix such a universal Turing machine understood as a partial computable function $\mathbb{M}:\IN^{k+1}\to\IN$. Then, for any Turing machine $M$ and corresponding partial computable function $f_M:\IN^k\to\IN$, there exists a natural number, also denoted by $M$, such that $f_M(x)=\mathbb{M}(M,x)$ for every $x\in\IN$. The natural number $M$ is called the \emph{code} of the Turing machine $M$ with respect to $\mathbb{M}$. This allows us to think of Turing machines, or, equivalently, computable functions, as input when representing them by their code with respect to our fixed universal Turing machine.

Definition \ref{DffComputableFunctionClass} can be interpreted as requiring $\mathcal{G}$ to admit a function $G$ that, upon input of a ``parameter'' $m$, determining an element of $\mathcal{G}$, and of an input value $n$, computably evaluates the $m$th element of $\mathcal{G}$ on $n$. If such a function exists, then we can identify $\mathcal{G}$ with $G$ and use the code of $G$ as a description of $\mathcal{G}$ that may serve as input to a decision algorithm. 

\end{document}